\documentclass{article}

\usepackage{amsthm}
\usepackage{hyperref}
\usepackage{mathtools}
\usepackage{tikz-cd}
\usepackage{enumitem}

\newcommand{\Lin}{\mathrm{L}}

\newcommand{\tr}{\mathrm{tr}}
\newcommand{\id}{\mathrm{id}}
\newcommand{\linspan}{\mathrm{span}}

\newtheorem{proposition}{Proposition}
\newtheorem{remark}[proposition]{Remark}
\newtheorem{corollary}[proposition]{Corollary}
\newtheorem{lemma}[proposition]{Lemma}
\newtheorem{definition}[proposition]{Definition}

\title{Strassen $2\times2$ Matrix Multiplication from a 3-dimensional
Volume Form}
\author{Benoit Jacob\\AMD\\{\small\tt benoit.jacob@amd.com}}
\date{}

\begin{document}
\maketitle

\begin{abstract}
  The Strassen $2\times2$ matrix multiplication algorithm arises from
  the volume form on the 3-dimensional quotient space of the $2\times
  2$ matrices by the multiples of identity.
\end{abstract}

\section{Introduction}

Strassen's $2\times 2$ matrix multiplication
algorithm~\cite{Strassen} is a formula for multiplying
$2\times 2$ matrices $a$ and $b$:
\begin{equation}
  \label{original_strassen}
  ab = \tr(a)\tr(b)I + \sum_{i=1}^6 \tr(X_ia)\tr(Y_ib)Z_i
\end{equation}
where $I$ is the identity matrix, $\tr$ is the trace, and the $X_i,
Y_i, Z_i$ are constant matrices. This formula is a rank 7
decomposition of the matrix multiplication tensor, that is, a
decomposition of matrix multiplication into a sum of 7 simple tensors.

This may be applied recursively to multiply $n\times n$ matrices in
$O(n^{\log 7/\log 2})$ time, approximately $O(n^{2.81})$, opening
a research field to which the book
\cite{AlgebraicComplexityTheory} provides an introduction. One line
of research has focused on further improving this asymptotic
complexity, notably \cite{StrassenLaser}, \cite{CoppersmithWinograd}
achieving $O(n^{2.376})$ and several refinements, recently
\cite{williams2023newboundsmatrixmultiplication} and
\cite{alman2024asymmetryyieldsfastermatrix}, approaching
$O(n^{2.37})$. Another line of research has pursued decompositions
of matrix multiplication tensors for other small matrix sizes such as
$3\times 3$, $4\times 4$, etc. These have often involved numerical
searches, such as the recent \cite{novikov2025alphaevolvecodingagentscientific}.

Despite these advances, algorithms faster than
$O(n^{2.81})$ are ``almost never implemented"
\cite{IkenmeyerLysikov}, and practical evaluations such as
\cite{dalberto2023strassensmatrixmultiplicationalgorithm} have
continued favoring the Strassen $2\times 2$ algorithm. Known
algorithms for other small matrix sizes struggle to significantly
improve on it: the up-to-date table \cite{TableFastMatmul} shows
complexity exponents clustering around 2.8. For $4\times 4$ matrix
multiplication, the Strassen
algorithm has tensor rank $7^2=49$, and that remained the state of
the art for over 55 years until
\cite{novikov2025alphaevolvecodingagentscientific} and
\cite{dumas2025noncommutativealgorithmmultiplying4x4} lowered that
from 49 to 48,
achieving a complexity
exponent of $2.79$. Moreover, known
algorithms with substantially lower asymptotic complexity tend to
have large constants in the $O$, as discussed in
\cite{alman2024improvingleadingconstantmatrix}.

The Strassen $2\times 2$ algorithm also stands out from a theoretical
perspective: its tensor rank 7 is known to be optimal
\cite{WINOGRAD1971381} and it is known to be essentially unique under
that constraint \cite{DEGROOTE19781}. By contrast, the tensor rank of
$n\times n$ matrix multiplication is still unknown for all $n\geq 3$.
For $n=3$, it is still only known to be between 19 and 23, see \cite{Blaeser03}.

Optimality and uniqueness make the Strassen $2\times 2$ algorithm a
basic fact of 2-dimensional linear algebra. Such facts are expected
to be simple and geometric. However, the original statement and proof
of the Strassen algorithm are calculations on matrix coefficients.
This has motivated a quest for geometric interpretations. The recent
\cite{ChiantiniIkenmeyerLandsbergOttaviani} and
\cite{IkenmeyerLysikov} in particular were inspirational to the
present article, and \cite{IkenmeyerLysikov} contains a survey of
this endeavour, tracing it back
to the years following the publication of the original Strassen
article \cite{Strassen}. Other recent articles in this line of
research include
\cite{ikenmeyer2025strassensalgorithmorbitflip},
\cite{Burichenko}, \cite{GrochowMooreGroupOrbits} and \cite{GrochowMoore}.

The present article offers a geometric interpretation of the Strassen
algorithm by addressing a more general question: is the Strassen
algorithm an independent fact in multilinear algebra, or could it be
related to a known fact? We derive it from the expansion of a
3-dimensional volume form into an antisymmetrized sum of $3!=6$
simple tensors. That expansion follows from the one-dimensionality of
the space of antisymmetric $n$-forms, which
is an abstract version of
\href{https://en.wikipedia.org/wiki/Cavalieri\%27s_principle}{Cavalieri's
principle}, the idea that the volume of a solid is unchanged by
sliding parallel slices. As to the question of why specifically
$2\times 2$ matrices, the answer is that as matrix multiplication is
a tensor of order 3 on matrix spaces, interpreting it as a volume
form requires a 3-dimensional matrix space, and the specific case of
$2\times 2$ matrices gives us such a 3-dimensional matrix space by
taking the quotient by multiples of the identity matrix: $3=2^2-1$.

\textbf{Acknowledgements.} The author would like to thank Eugene Ha, Paolo
d'Alberto and Zach Garvey for helpful comments.

\section{Overview}

Fix, for this entire article, a 2-dimensional vector space $V$ over a
field $k$. Let $\Lin(V)$ denote the space
of linear maps from $V$ to itself. Start by considering this
trilinear form $g$ on $\Lin(V)$:
\begin{equation}
  \label{intro_g}
  g(a_1, a_2, a_3) = \tr(a_1a_2a_3) - \tr(a_3a_2a_1).
\end{equation}
We notice (Lemma \ref{g_volume_form}) that $g$ is a
volume form on the quotient of $\Lin(V)$ by the multiples of the
identity matrix, which has dimension 3. This gives (Lemma
\ref{prop_decomp_g}) rank 6 decompositions of $g$ parametrized by
bases of the dual space. Our next step is to relate $g$ to this other
trilinear form $h$ on $\Lin(V)$:
\begin{equation}
  \label{intro_h}
  h(a_1, a_2, a_3) = \tr(a_1)\tr(a_2)\tr(a_3) - \tr(a_1a_2a_3).
\end{equation}
Using the natural isomorphism $\Lin(V)^{*\otimes3} \simeq
\Lin(V^{\otimes 3})$, view the trilinear forms
$\tr(a_1)\tr(a_2)\tr(a_3)$, $\tr(a_1a_2a_3)$ and $\tr(a_3a_2a_1)$ as
respectively the permutations $\id$, (1\,2\,3) and (3\,2\,1) permuting the
terms in $V^{\otimes3}$ (Lemma \ref{lemma_t_sigma_star}). This allows
viewing $h$ as the composition of $g$ with a linear map induced by the
permutation (3\,2\,1)
(Lemma \ref{relation_g_h}), which allows transporting certain rank 6
decompositions
of $g$ into rank 6 decompositions of $h$ (Proposition
\ref{prop_decomp_h}, our main result), yielding (Corollary
\ref{concrete_decomp_h})
\begin{equation}
  \label{intro_strassen}
  \tr(a_1a_2a_3) = \tr(a_1)\tr(a_2)\tr(a_3) - \textrm{\{rank 6
  decomposition of $h$\}},
\end{equation}
which is a rank 7 decomposition of $\tr(a_1a_2a_3)$. Dualizing that
yields a rank 7 decomposition of matrix multiplication (Corollary
\ref{concrete_decomp_h_bilinear}) parametrized by a choice of basis. A
specific choice yields the original Strassen algorithm (Corollary
\ref{good_old_strassen}).

\section{Terminology and lemmas in tensor algebra}

Throughout this article, ``vector space'' means finite-dimensional
vector space. For any vector spaces $U$ and $W$ over a field $k$, let
$\Lin(U,W)$ denote the space of linear maps from $U$ to $W$. In the
case $W=U$, we write $\Lin(U)$ for $\Lin(U, U)$. In the case $W=k$,
we let $U^*=\Lin(U,k)$ denote the dual space of $U$.

Given any vector spaces $U_1,\ldots,U_n$, $W_1,\ldots,W_n$, we will
make the identification
$$\Lin(U_1, W_1)\otimes\ldots\otimes\Lin(U_n, W_n) \simeq
\Lin(U_1\otimes\ldots\otimes U_n, W_1\otimes\ldots\otimes W_n).$$
As special cases of that, for any vector space $U$ over $k$, for any
positive integer $n$,
we identify
$\Lin(U)^{\otimes n}\simeq \Lin(U^{\otimes n})$ and $U^{*\otimes n}
\simeq (U^{\otimes n})^*$. The latter identification means concretely
that given linear forms $\mu_1,\ldots,\mu_n$ on a vector space $U$,
we identify the tensor $\mu_1\otimes\ldots\otimes\mu_n$ as the
$n$-linear form on $U$ given, for all vectors $u_1,\ldots,u_n$ in $U$, by:
$$(\mu_1\otimes\ldots\otimes\mu_n)(u_1, \ldots, u_n) =
\mu_1(u_1)\ldots \mu_n(u_n).$$

Let us now describe a few other natural isomorphisms of tensor spaces
that we will keep as named isomorphisms, refraining from making identifications.
\begin{definition}
  \label{definition_iota_star}
  For any vector space $U$, define linear maps $\iota$, $\iota^*$ and $*$:
  \begin{align*}
    \iota \; &\colon \; U\otimes U^* \rightarrow \Lin(U), &v\otimes\lambda
    &\mapsto  \iota(v\otimes\lambda) = (u \mapsto \lambda(u)v) \\
    \iota^* \; &\colon \; U\otimes U^* \rightarrow \Lin(U)^*,
    &v\otimes\lambda &\mapsto  \iota^*(v\otimes\lambda) =  (a \mapsto
    \lambda(a(v))) \\
    * \; &\colon \; \Lin(U) \rightarrow \Lin(U)^*, &a &\mapsto  a^* =  (b
    \mapsto \tr(ab)).
  \end{align*}
\end{definition}

\begin{lemma}
  The linear maps $\iota$, $\iota^*$ and $*$ are isomorphisms.
\end{lemma}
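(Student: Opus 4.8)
The plan is to use that all three spaces $U\otimes U^*$, $\Lin(U)$ and $\Lin(U)^*$ have the same finite dimension $(\dim U)^2$, so for each of the three maps it suffices to prove bijectivity, and in fact to exhibit a basis of the source that is sent to a basis of the target. Fix a basis $(e_1,\ldots,e_d)$ of $U$ with dual basis $(e_1^*,\ldots,e_d^*)$ of $U^*$, so that the tensors $e_i\otimes e_j^*$ form a basis of $U\otimes U^*$, and let $E_{ij}\in\Lin(U)$ denote the elementary linear map $e_k\mapsto \delta_{jk}e_i$, so that the $E_{ij}$ form a basis of $\Lin(U)$.

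First I would treat $\iota$. Unwinding the definition, $\iota(e_i\otimes e_j^*)$ is the map $u\mapsto e_j^*(u)e_i$, which is exactly $E_{ij}$. Thus $\iota$ carries a basis to a basis and is an isomorphism. Next, for $\iota^*$, the element $\iota^*(e_i\otimes e_j^*)$ is the linear form $a\mapsto e_j^*(a(e_i))$, i.e.\ the function extracting the $(j,i)$ matrix entry of $a$ in the chosen basis; these coordinate forms are precisely the basis of $\Lin(U)^*$ dual to $(E_{ij})$, so $\iota^*$ is also an isomorphism.

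For $*$ I would avoid a third basis computation by relating it to the other two. The claim is that $*\circ\iota=\iota^*$, equivalently $*=\iota^*\circ\iota^{-1}$, which immediately gives that $*$ is an isomorphism. To verify the identity it is enough to check it on the spanning tensors $v\otimes\lambda$: on the one hand $\iota^*(v\otimes\lambda)$ sends $b\mapsto\lambda(b(v))$; on the other hand $\bigl(\iota(v\otimes\lambda)\bigr)^*$ sends $b\mapsto\tr\bigl(\iota(v\otimes\lambda)\circ b\bigr)$, and since $\iota(v\otimes\lambda)\circ b$ is the map $u\mapsto\lambda(b(u))v$, computing its trace in the basis gives $\sum_k \lambda(b(e_k))\,e_k^*(v)=\lambda\bigl(b(v)\bigr)$, using that $\sum_k e_k^*(v)e_k=v$. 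The two agree, so we are done. (Alternatively one can argue directly that $*$ is injective because the trace pairing $(a,b)\mapsto\tr(ab)$ is nondegenerate, as its Gram matrix in the basis $(E_{ij})$ is a permutation matrix; I would mention this as the conceptual reason but keep the proof via $*=\iota^*\circ\iota^{-1}$.)

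There is no real obstacle here: the statement is a packaging of standard finite-dimensional linear algebra, and the only computation of any substance is the one-line trace evaluation establishing $*\circ\iota=\iota^*$. The main point to get right is bookkeeping — keeping the roles of $U$ and $U^*$, and of a map versus its matrix, straight — rather than any genuine difficulty.
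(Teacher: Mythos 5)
Your proof is correct but considerably more elaborate than the paper's, which disposes of the lemma in one sentence: the three maps are injective and the source and target all have dimension $n^2$. Your route for $\iota$ and $\iota^*$ (explicitly sending the basis $e_i\otimes e_j^*$ to the elementary matrices and the coordinate forms) is a fleshing-out of the paper's injectivity claim and is fine, though note the small bookkeeping slip that $\iota^*(e_i\otimes e_j^*)$ is the form dual to $E_{ji}$, not to $E_{ij}$; this does not affect the bijectivity conclusion. The more substantive divergence is in how you handle $*$: you establish the identity $*\circ\iota=\iota^*$, which is precisely the content of the paper's Lemma~\ref{diagram_iota_star_commutes}, stated and proved separately immediately after this lemma (there via the equations of Lemma~\ref{iota_equations}, checked on rank-one matrices $\iota(u\otimes\mu)$, rather than via your direct trace computation $\sum_k\lambda(b(e_k))e_k^*(v)=\lambda(b(v))$). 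Both verifications are valid; folding the diagram lemma into this proof is logically tidy and saves one dimension count, but in the paper's organization it would render Lemma~\ref{diagram_iota_star_commutes} redundant. Your parenthetical remark that $*$ is injective because the trace pairing is nondegenerate is exactly the fact the paper's terse proof implicitly appeals to, so in that sense the two proofs agree on the conceptual core.
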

\begin{proof}
  These maps are injective, and when $U$ has dimension $n$, the
  source and destination spaces have the same dimension $n^2$.
\end{proof}

\begin{lemma}
  \label{iota_equations}
  For any vector space $U$, for any $u, v$ in $U$ and any
  $\lambda,\mu$ in $U^*$, we have
  \begin{align}
    \iota(v\otimes\lambda) \iota(u\otimes\mu) &= \lambda(u)
    \iota(v\otimes\mu),\label{iota_mul}\\
    \iota^*(v\otimes\lambda)(\iota(u\otimes\mu)) &=
    \lambda(u)\mu(v),\label{iota_pairing}\\
    \tr(\iota(v\otimes\lambda)) &= \lambda(v). \label{iota_trace}
  \end{align}
\end{lemma}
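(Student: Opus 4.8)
The plan is to verify each of the three identities directly from Definition \ref{definition_iota_star}, by unwinding what each side is concretely as a linear map, a linear functional, or a scalar. Since every map involved is (bi)linear in $v,u,\lambda,\mu$, there is no loss in working with the generic elementary tensors as written.

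For \eqref{iota_mul}, both sides are elements of $\Lin(U)$, so I would evaluate both on an arbitrary vector $w\in U$. The composite on the left sends $w$ first to $\iota(u\otimes\mu)(w)=\mu(w)\,u$ and then to $\iota(v\otimes\lambda)\bigl(\mu(w)\,u\bigr)=\mu(w)\lambda(u)\,v$, whereas the right-hand side sends $w$ to $\lambda(u)\,\iota(v\otimes\mu)(w)=\lambda(u)\mu(w)\,v$; these agree for every $w$, so the maps are equal. For \eqref{iota_pairing}, I would simply substitute $a=\iota(u\otimes\mu)$ into the functional $\iota^*(v\otimes\lambda)=\bigl(a\mapsto\lambda(a(v))\bigr)$, getting $\lambda\bigl(\iota(u\otimes\mu)(v)\bigr)=\lambda\bigl(\mu(v)\,u\bigr)=\lambda(u)\mu(v)$ by linearity of $\lambda$.

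For \eqref{iota_trace}, the cleanest route is to use bilinearity of $\iota$ in $(v,\lambda)$ to reduce to the case $v=e_i$, $\lambda=e^j$ for a basis $(e_i)$ of $U$ with dual basis $(e^j)$. Then $\iota(e_i\otimes e^j)$ is the matrix unit $E_{ij}$ with respect to this basis (it sends $e_j\mapsto e_i$ and $e_k\mapsto 0$ for $k\neq j$), whose trace is $\delta_{ij}=e^j(e_i)$, which is exactly $\lambda(v)$ in this case. Equivalently, writing $v=\sum_i v^i e_i$ and $\lambda=\sum_j\lambda_j e^j$, the matrix of $\iota(v\otimes\lambda)$ has $(i,k)$-entry $v^i\lambda_k$, and summing the diagonal gives $\sum_i v^i\lambda_i=\lambda(v)$.

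None of these steps presents a genuine obstacle: the lemma is an exercise in tracking the definitions. The only points requiring care are keeping the order of composition straight in \eqref{iota_mul}, and remembering that $\iota^*$ lands in $\Lin(U)^*$ rather than $\Lin(U)$, so that in \eqref{iota_pairing} the pairs $(v,\lambda)$ and $(u,\mu)$ play genuinely asymmetric roles and the resulting scalar is $\lambda(u)\mu(v)$ rather than $\lambda(v)\mu(u)$.
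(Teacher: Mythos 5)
Your proposal is correct and follows essentially the same route as the paper: unwind the definitions, evaluate \eqref{iota_mul} on an arbitrary $w\in U$, substitute into $\iota^*$ for \eqref{iota_pairing}, and compute the trace in \eqref{iota_trace} via a matrix representation. The only cosmetic difference is in the trace computation, where the paper picks a basis whose first vector is $v$ so that the diagonal collapses to a single entry $\lambda(v)$, whereas you either reduce to basis elements by bilinearity or expand in coordinates; both are the same elementary idea.
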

\begin{proof}
  For any $w$ in $U$, we have $\iota(v\otimes\lambda)
  \iota(u\otimes\mu) (w) = \iota(v\otimes\lambda)(\mu(w) u) =
  \lambda(u) \mu(w) v = \lambda(u) \iota(v\otimes\mu) (w)$,
  establishing Equation (\ref{iota_mul}). We have
  $\iota^*(v\otimes\lambda)(\iota(u\otimes\mu)) =
  \lambda(\iota(u\otimes\mu)(v)) =
  \lambda(\mu(v)u)=\mu(v)\lambda(u)$, establishing Equation
  (\ref{iota_pairing}). Let $w_1,\ldots,w_n$ be a basis of $U$ such
  that $w_1=v$. In that basis, the matrix of $\iota(v\otimes\lambda)$ is
  $\left(
    \begin{smallmatrix}
      \lambda(v) & \lambda(w_2) & \cdots & \lambda(w_n) \\
      0 & 0 & \cdots & 0 \\
      \cdots & \cdots & \cdots & \cdots
  \end{smallmatrix}\right),$
  whose trace is $\lambda(v)$, establishing Equation (\ref{iota_trace}).
\end{proof}

\begin{lemma}
  \label{diagram_iota_star_commutes}
  For any vector space $U$, the following diagram commutes,
  justifying the notation $\iota^*$.
  \begin{equation}
    \label{diagram_iota_star}
    \begin{tikzcd}
      & U\otimes U^* \arrow[dl, "\iota" swap] \arrow[dr, "\iota^*"] & \\
      \Lin(U) \arrow[rr, "*"] & & \Lin(U)^*
    \end{tikzcd}
  \end{equation}
\end{lemma}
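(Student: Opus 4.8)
The plan is to show that $*\circ\iota=\iota^*$ as linear maps $U\otimes U^*\to\Lin(U)^*$. Since all three maps involved are linear and $U\otimes U^*$ is spanned by the pure tensors $v\otimes\lambda$, it suffices to prove that $\iota(v\otimes\lambda)^*=\iota^*(v\otimes\lambda)$ in $\Lin(U)^*$ for every $v\in U$ and every $\lambda\in U^*$. Both sides of this identity are linear forms on $\Lin(U)$, and by the previous lemma $\iota$ is an isomorphism, so $\Lin(U)$ is spanned by the endomorphisms $\iota(u\otimes\mu)$ with $u\in U$, $\mu\in U^*$; hence it is enough to check that the two linear forms agree on each such $\iota(u\otimes\mu)$.

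For the left-hand side, the definition of $*$ gives $\iota(v\otimes\lambda)^*\big(\iota(u\otimes\mu)\big)=\tr\big(\iota(v\otimes\lambda)\,\iota(u\otimes\mu)\big)$. Applying Equation (\ref{iota_mul}) this becomes $\lambda(u)\,\tr\big(\iota(v\otimes\mu)\big)$, and then Equation (\ref{iota_trace}) evaluates the trace to $\mu(v)$, so the left-hand side equals $\lambda(u)\mu(v)$. For the right-hand side, Equation (\ref{iota_pairing}) gives $\iota^*(v\otimes\lambda)\big(\iota(u\otimes\mu)\big)=\lambda(u)\mu(v)$ directly. The two values coincide, which establishes the commutativity of diagram (\ref{diagram_iota_star}).

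There is essentially no obstacle: the computational content is entirely contained in Lemma \ref{iota_equations}, and the only point requiring a little care is the double reduction to pure tensors in $U\otimes U^*$ and to rank-one endomorphisms in $\Lin(U)$, which is legitimate by bilinearity and by the surjectivity of $\iota$. One could instead fix a basis of $U$ and compute both traces in coordinates, but the spanning-set argument is shorter and keeps everything coordinate-free.
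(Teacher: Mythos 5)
Your proof is correct and follows the same route as the paper: reduce to pure tensors $v\otimes\lambda$ and then to rank-one endomorphisms $\iota(u\otimes\mu)$, and apply the three identities of Lemma \ref{iota_equations} to show both sides equal $\lambda(u)\mu(v)$. The only cosmetic difference is the direction of the chain of equalities, which the paper runs from $\iota^*$ toward $\iota(\cdot)^*$.
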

\begin{proof}
  The claim is that for all $v\otimes\lambda$ in $U\otimes U^*$, we
  have $\iota^*(v\otimes\lambda) = \iota(v\otimes\lambda)^*$ as
  elements of $\Lin(U)^*$. By linearity, it is enough to check that
  these forms in $\Lin(U)^*$ agree on rank one elements of $\Lin(U)$,
  which are the $\iota(u \otimes \mu)$ with $u$ in
  $U$ and $\mu$ in $U^*$. Indeed, the equations from Lemma
  \ref{iota_equations} give:
  \begin{align*}
    \iota^*(v\otimes\lambda)(\iota(u\otimes\mu)) & = \lambda(u)\mu(v)
    & \text{by Equation (\ref{iota_pairing})} \\
    & = \tr(\lambda(u) \iota(v\otimes \mu)) & \text{by Equation
    (\ref{iota_trace})}\\
    & = \tr(\iota(v\otimes\lambda) \iota(u\otimes\mu)) & \text{by
    Equation (\ref{iota_mul})}\\
    & = \iota(v\otimes\lambda)^*(\iota(u\otimes\mu)) & \text{by
    Definition \ref{definition_iota_star}.}
  \end{align*}
\end{proof}

\begin{definition}
  \label{definition_left_right_mul}
  For any vector space $U$ and any element $a$ of $\Lin(U)$, let
  $L_a, R_a$ denote respectively the left and right
  multiplication-by-$a$ maps: $L_a(b)=ab$ and $R_a(b) = ba$ for all
  $b$ in $\Lin(U)$. Thus $L_a$ and $R_a$ are elements of $\Lin(\Lin(U))$.
\end{definition}

\begin{lemma}
  \label{star_product}
  For any vector space $U$ over a field $k$ and any $a, b$ in $\Lin(U)$, we have
  $$(ab)^* = a^* \circ L_b = b^* \circ R_a$$
  where $\circ$ denotes the composition of linear maps
  $\Lin(U)\rightarrow\Lin(U)\rightarrow k$.
\end{lemma}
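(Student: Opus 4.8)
The plan is to verify the two claimed identities of linear forms on $\Lin(U)$ by evaluating both sides on an arbitrary $c$ in $\Lin(U)$, since two elements of $\Lin(U)^*$ agree precisely when they agree on all of $\Lin(U)$. By Definition \ref{definition_iota_star}, for any $c$ in $\Lin(U)$ we have $(ab)^*(c) = \tr((ab)c)$, while $(a^* \circ L_b)(c) = a^*(L_b(c)) = a^*(bc) = \tr(a(bc))$ and $(b^* \circ R_a)(c) = b^*(R_a(c)) = b^*(ca) = \tr(b(ca))$, using Definition \ref{definition_left_right_mul} for $L_b$ and $R_a$.

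It then remains to observe that $\tr((ab)c) = \tr(a(bc))$ by associativity of composition in $\Lin(U)$, and that $\tr(a(bc)) = \tr(b(ca))$ by the cyclic invariance of the trace. This establishes both equalities simultaneously. There is no real obstacle here; the only facts used are associativity of the product on $\Lin(U)$ and cyclicity of $\tr$, and the lemma is recorded mainly so that later arguments can freely rewrite $(ab)^*$ as a precomposition of a single-factor dual with a left or right multiplication operator.
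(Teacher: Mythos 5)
Your proposal is correct and matches the paper's proof essentially verbatim: both evaluate all three forms on an arbitrary $c\in\Lin(U)$ and reduce the claim to associativity of multiplication and cyclic invariance of the trace. Nothing further to add.
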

\begin{proof}
  For any $c$ in $\Lin(U)$, we have
  $$(ab)^*(c) = \tr(abc) = \tr(a L_b(c)) = a^*(L_b(c)) = (a^* \circ L_b)(c)$$
  and similarly
  $$(ab)^*(c) = \tr(abc) = \tr(bca) = \tr(b R_a(c)) = b^*(R_a(c)) =
  (b^* \circ R_a)(c).$$
\end{proof}

\begin{definition}
  \label{definition_t_sigma}
  For any vector space $U$, for any permutation $\sigma$ in the
  symmetric group $S_3$,
  define a map $t_\sigma$ in $\Lin(U^{\otimes3})$ by letting, for all
  $u_1, u_2, u_3$ in $U$,
  \begin{equation}
    \label{def_t_sigma}
    t_\sigma(u_1\otimes u_2\otimes u_3) = u_{\sigma(1)} \otimes
    u_{\sigma(2)} \otimes  u_{\sigma(3)}.
  \end{equation}
\end{definition}

The following lemma is classical (\cite{Markl2008}, \cite{Weyl}).

\begin{lemma}
  \label{lemma_t_sigma_star}
  For any vector space $U$, the images of $t_\id$, $t_{(1\,2\,3)}$,
  $t_{(3\,2\,1)}$ under the map $t\mapsto t^*$ from Definition
  \ref{definition_iota_star} are the following trilinear forms, given
  by their values at any $a_1 \otimes a_2 \otimes a_3$ in $\Lin(U)^{\otimes 3}$:
  \begin{eqnarray*}
    t_\id^*(a_1\otimes a_2\otimes a_3) & = & \tr(a_1) \tr(a_2) \tr(a_3), \\
    t_{(1\,2\,3)}^*(a_1\otimes a_2\otimes a_3) & = & \tr(a_1a_2a_3), \\
    t_{(3\,2\,1)}^*(a_1\otimes a_2\otimes a_3) & = & \tr(a_3a_2a_1).
  \end{eqnarray*}
\end{lemma}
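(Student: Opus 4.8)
The plan is to reduce everything to the behavior of the three maps $t_\sigma$ on rank-one tensors and then apply the trace identities already collected in Lemma~\ref{iota_equations}. By definition, $t^*$ is the linear form $b \mapsto \tr(tb)$ on $\Lin(U^{\otimes 3})$, where we use the identification $\Lin(U)^{\otimes 3} \simeq \Lin(U^{\otimes 3})$. So I must compute $\tr(t_\sigma \circ (a_1 \otimes a_2 \otimes a_3))$ for $\sigma \in \{\id, (123), (321)\}$, where $a_1 \otimes a_2 \otimes a_3$ acts on $U^{\otimes 3}$ as the tensor product of the three operators. Both sides are linear (indeed multilinear) in $(a_1, a_2, a_3)$, so by the bilinearity/density argument used already in Lemma~\ref{diagram_iota_star_commutes} it suffices to verify the three identities when each $a_j$ is a rank-one operator $\iota(v_j \otimes \lambda_j)$; the general case follows by writing each $a_j$ as a sum of such and expanding.

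First I would record how the composite $t_\sigma \circ (a_1 \otimes a_2 \otimes a_3)$ acts on a decomposable vector $u_1 \otimes u_2 \otimes u_3$: it sends it to $a_{\sigma^{-1}(1)}(u_{?}) \otimes \cdots$ — more precisely, applying $a_1 \otimes a_2 \otimes a_3$ first gives $a_1(u_1) \otimes a_2(u_2) \otimes a_3(u_3)$, and then $t_\sigma$ permutes these slots, yielding $a_{\sigma(1)}(u_{\sigma(1)}) \otimes a_{\sigma(2)}(u_{\sigma(2)}) \otimes a_{\sigma(3)}(u_{\sigma(3)})$. The trace of such a map on $U^{\otimes 3}$ can be computed by choosing a basis of $U$, forming the induced basis of $U^{\otimes 3}$, and summing the diagonal entries; equivalently, $\tr$ on $U^{\otimes 3}$ of a map that, slot by slot, first permutes and then applies operators, decomposes into a product of traces over the cycles of $\sigma$. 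Concretely: $\tr_{U^{\otimes 3}}(t_\sigma \circ (a_1 \otimes a_2 \otimes a_3)) = \prod_{c \text{ cycle of } \sigma} \tr_U\!\big(\text{product of the } a_j \text{ around } c\big)$. This is the one structural fact I would state and prove (or cite as classical) — it is the cycle-index formula for the trace of a permuted tensor product.

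Granting that formula, the three cases are immediate: $\id$ has three $1$-cycles, giving $\tr(a_1)\tr(a_2)\tr(a_3)$; the $3$-cycle $(123)$ has one $3$-cycle, giving a single trace of the product of the $a_j$ taken in the cyclic order dictated by the cycle, namely $\tr(a_1 a_2 a_3)$ (up to the cyclic invariance of the trace, which fixes the ambiguity of where the cycle ``starts''); and $(321) = (132)$, being the other $3$-cycle, gives $\tr(a_1 a_3 a_2) = \tr(a_3 a_2 a_1)$, again using cyclicity. Care is needed only in matching the bookkeeping convention in Definition~\ref{definition_t_sigma} (where $t_\sigma$ sends slot $i$'s content to slot $i$ from slot $\sigma(i)$, i.e.\ $t_\sigma$ reads off $u_{\sigma(j)}$ into slot $j$) to the direction of composition, so that $(123)$ lands on $\tr(a_1a_2a_3)$ rather than its reverse; the worked rank-one computation $\tr(t_{(123)} \circ (\iota(v_1\otimes\lambda_1)\otimes\iota(v_2\otimes\lambda_2)\otimes\iota(v_3\otimes\lambda_3))) = \lambda_1(v_2)\lambda_2(v_3)\lambda_3(v_1)$, compared against $\tr(\iota(v_1\otimes\lambda_1)\iota(v_2\otimes\lambda_2)\iota(v_3\otimes\lambda_3)) = \lambda_1(v_2)\lambda_2(v_3)\lambda_3(v_1)$ via repeated use of Equations~(\ref{iota_mul}) and~(\ref{iota_trace}), pins down the convention.

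The main obstacle I anticipate is precisely this indexing discipline: proving the cycle-trace formula with the permutation acting in the correct direction, and then not dropping or transposing a cycle when I specialize to $(123)$ versus $(321)$. Everything else — linearity reduction to rank-one operators, the trace computations via Lemma~\ref{iota_equations} — is routine. If I wanted to avoid the general cycle-index lemma altogether, I could instead do all three cases by the brute rank-one computation: expand $t_\sigma \circ (\iota(v_1\otimes\lambda_1)\otimes\iota(v_2\otimes\lambda_2)\otimes\iota(v_3\otimes\lambda_3))$, observe it is again a rank-one operator on $U^{\otimes 3}$ of the form $\iota\big((v_{\tau(1)}\otimes v_{\tau(2)}\otimes v_{\tau(3)}) \otimes (\lambda_1\otimes\lambda_2\otimes\lambda_3)\big)$ for an appropriate $\tau$ depending on $\sigma$, apply Equation~(\ref{iota_trace}) once to get $\lambda_1(v_{\tau(1)})\lambda_2(v_{\tau(2)})\lambda_3(v_{\tau(3)})$, and match this against the right-hand sides computed the same way — which is short enough that it may be the cleanest route to present.
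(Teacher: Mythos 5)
Your proposal is correct. The cycle-trace formula you lead with,
$$\tr_{U^{\otimes 3}}\bigl(t_\sigma \circ (a_1 \otimes a_2 \otimes a_3)\bigr) = \prod_{c \text{ cycle of } \sigma} \tr_U\bigl(\text{product of the } a_j \text{ around } c\bigr),$$
is a genuine alternative to the paper's argument: it is a classical fact (it is essentially the invariant-theory result the paper gestures at in its remark preceding the lemma), and once stated it dispatches all three cases at a stroke, including the indexing direction if one is careful to read the cycle in the orientation consistent with Definition~\ref{definition_t_sigma}. The cost is that you must state, prove, or at least carefully cite that formula, and you correctly identify the bookkeeping of the cycle direction as the one place where it can go wrong (a reversed cycle swaps the roles of $(123)$ and $(321)$). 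The paper does not take this route. Instead, it does precisely the thing you describe at the end as the ``cleanest route'': reduce by multilinearity to rank-one operators $a_i = \iota(v_i\otimes\lambda_i)$, observe that their tensor product is the single rank-one operator $\iota(v_1\otimes v_2\otimes v_3 \otimes \lambda_1\otimes\lambda_2\otimes\lambda_3)$ on $U^{\otimes 3}$, pull $t_\sigma$ inside $\iota$ to permute the $v_i$, and apply Equation~(\ref{iota_trace}) once to obtain $\lambda_1(v_{\sigma(1)})\lambda_2(v_{\sigma(2)})\lambda_3(v_{\sigma(3)})$, then match that against the right-hand sides (themselves computed via Equations~(\ref{iota_mul}) and~(\ref{iota_trace})). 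So your fallback alternative coincides with the paper's proof almost verbatim, while your primary route trades a bit of combinatorial bookkeeping for more generality (it works for any $n$ and any $\sigma$, not just the three permutations needed here). Either is fine; the paper's choice avoids having to import or reprove the cycle lemma, which is the smaller footprint for a paper that only needs $n=3$.
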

\begin{proof}
  By linearity, it is enough to prove these equations in the case
  where the $a_i$ are simple tensors of the form $a_i =
  \iota(v_i\otimes\lambda_i)$ with $v_i$ in $U$ and $\lambda_i$ in $U^*
  $. Letting the dot ($\cdot$) denote multiplication in $\Lin(U^{\otimes
  3})$, for any permutation $\sigma$ in $S_3$, we have
  \begin{eqnarray*}
    t_{\sigma}^*(a_1\otimes a_2\otimes a_3) & = & \tr(t_{\sigma}
      \cdot (\iota(v_1\otimes\lambda_1) \otimes
    \iota(v_2\otimes\lambda_2) \otimes \iota(v_3\otimes\lambda_3))) \\
    & = & \tr(t_{\sigma} \cdot \iota(v_1\otimes v_2 \otimes
    v_3\otimes\lambda_1\otimes\lambda_2\otimes\lambda_3)) \\
    & = & \tr(\iota(t_{\sigma}(v_1\otimes v_2 \otimes v_3)\otimes
    \lambda_1\otimes\lambda_2\otimes\lambda_3)) \\
    & = & \tr(\iota(v_{\sigma(1)}\otimes v_{\sigma(2)} \otimes
    v_{\sigma(3)}\otimes\lambda_1\otimes\lambda_2\otimes\lambda_3)) \\
    & = & \lambda_1(v_{\sigma(1)}) \lambda_2(v_{\sigma(2)})
    \lambda_3(v_{\sigma(3)}).
  \end{eqnarray*}
  From here, the results follow for each of the three particular
  permutations $\sigma$ being considered.
\end{proof}

In the next section, in the proof of our main result (Proposition
\ref{prop_decomp_h}), we will need the following Lemma
\ref{lemma_eval_composition_with_L_t_sigma}, which is about composing
the $L_{t_\sigma}$ with forms that are simple tensors. In the proof
of Lemma \ref{lemma_eval_composition_with_L_t_sigma}, we will need
this simpler lemma about evaluating the $L_{t_\sigma}$ on simple tensors:

\begin{lemma}
  \label{lemma_eval_L}
  For any vector space $U$, for any $u_1, u_2, u_3$ in $U$, any
  $\zeta_1,\zeta_2,\zeta_3$ in $U^*$, and any permutation $\sigma$ in
  $S_3$, we have the following equality between elements of
  $\Lin(U^{\otimes 3})$:
  $$L_{t_\sigma}\Biggl(\bigotimes_{i=1,2,3} \iota(u_i \otimes
  \zeta_i)\Biggr) = \bigotimes_{i=1,2,3} \iota(u_{\sigma(i)} \otimes \zeta_i).$$
\end{lemma}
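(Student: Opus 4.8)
The plan is to verify the identity by evaluating both sides, viewed as operators on $U^{\otimes 3}$ through the identification $\Lin(U)^{\otimes 3} \simeq \Lin(U^{\otimes 3})$, at an arbitrary simple tensor $w_1 \otimes w_2 \otimes w_3$ with $w_1, w_2, w_3$ in $U$; since these span $U^{\otimes 3}$ and both sides are linear maps, agreement on all of them yields the claimed equality in $\Lin(U^{\otimes 3})$. For the left-hand side, recall from Definition \ref{definition_left_right_mul} that $L_{t_\sigma}$ is left multiplication by $t_\sigma$ in $\Lin(U^{\otimes 3})$, i.e. composition with $t_\sigma$ performed last, so the left-hand side is $t_\sigma \circ \bigl(\bigotimes_{i} \iota(u_i \otimes \zeta_i)\bigr)$. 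Under the identification above, the tensor of operators $\bigotimes_i \iota(u_i \otimes \zeta_i)$ acts factor by factor, sending $w_1 \otimes w_2 \otimes w_3$ to $\bigotimes_i \iota(u_i \otimes \zeta_i)(w_i)$, which by Definition \ref{definition_iota_star} equals $\bigotimes_i \zeta_i(w_i) u_i = \bigl(\prod_i \zeta_i(w_i)\bigr)\, u_1 \otimes u_2 \otimes u_3$. Applying $t_\sigma$ and invoking Definition \ref{definition_t_sigma} turns this into $\bigl(\prod_i \zeta_i(w_i)\bigr)\, u_{\sigma(1)} \otimes u_{\sigma(2)} \otimes u_{\sigma(3)}$.

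Next I would compute the right-hand side in the same way: the operator $\bigotimes_i \iota(u_{\sigma(i)} \otimes \zeta_i)$ sends $w_1 \otimes w_2 \otimes w_3$ to $\bigotimes_i \iota(u_{\sigma(i)} \otimes \zeta_i)(w_i) = \bigotimes_i \zeta_i(w_i) u_{\sigma(i)} = \bigl(\prod_i \zeta_i(w_i)\bigr)\, u_{\sigma(1)} \otimes u_{\sigma(2)} \otimes u_{\sigma(3)}$, which is exactly the expression obtained for the left-hand side. Hence the two operators agree on every simple tensor, and therefore coincide.

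The one place that needs care — the main, if mild, obstacle — is the bookkeeping around the identification $\Lin(U)^{\otimes 3} \simeq \Lin(U^{\otimes 3})$: one must be sure that a pure tensor of operators acts on a pure tensor of vectors slot by slot in matching order, and that multiplication in $\Lin(U^{\otimes 3})$, hence $L_{t_\sigma}$, is operator composition with $t_\sigma$ applied after the other factor. Once that is fixed, both sides collapse to the same scalar $\prod_i \zeta_i(w_i)$ times $u_{\sigma(1)} \otimes u_{\sigma(2)} \otimes u_{\sigma(3)}$ and nothing further is required. A slightly heavier alternative would be to expand $t_\sigma$ in a basis of $U$ as a sum of $\iota$-images and apply the multiplication rule (\ref{iota_mul}) repeatedly, but the basis-free evaluation above is cleaner.
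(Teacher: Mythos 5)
Your proof is correct and takes essentially the same approach as the paper's: both are direct unwindings of the definitions of $L_{t_\sigma}$, $\iota$, and $t_\sigma$. The only difference is presentational — the paper manipulates the operators themselves, composing $t_\sigma$ with the rank-one operator $\bigotimes_i \iota(u_i\otimes\zeta_i)$ viewed (via the extended $\iota$ on $U^{\otimes3}\otimes U^{*\otimes3}$) as a single $\iota(u_1\otimes u_2\otimes u_3\otimes\zeta_1\otimes\zeta_2\otimes\zeta_3)$, whereas you evaluate both sides on a test simple tensor $w_1\otimes w_2\otimes w_3$, which is a slightly more elementary and self-contained way to reach the same three-line computation.
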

\begin{proof}
  Both sides are equal to $t_\sigma \cdot \bigotimes_{i=1,2,3}
  \iota(u_i\otimes \zeta_i).$
\end{proof}

\begin{lemma}
  \label{lemma_eval_composition_with_L_t_sigma}
  For any vector space $U$, for any $v_1, v_2, v_3$
  in $U$, any $\lambda_1,\lambda_2,\lambda_3$ in $U^*$, and any
  permutation $\sigma$ in $S_3$, we have the following equality
  between elements of $\Lin(U^{\otimes 3})^*$:
  $$
  \Biggl(\bigotimes_{i=1,2,3} \iota^*(v_i \otimes \lambda_i)\Biggr)
  \circ L_{t_\sigma} = \bigotimes_{i=1,2,3} \iota^*(v_i \otimes
  \lambda_{\sigma^{-1}(i)}).
  $$
\end{lemma}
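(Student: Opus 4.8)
The plan is to prove the equality of these two elements of $\Lin(U^{\otimes 3})^*$ by checking that they agree on a spanning family of $\Lin(U^{\otimes 3})$, using linearity of both sides (precomposition with $L_{t_\sigma}$ is linear). Since $\iota$ is an isomorphism $U\otimes U^*\to\Lin(U)$ and the simple tensors $u\otimes\zeta$ span $U\otimes U^*$, the elements $\bigotimes_{i=1,2,3}\iota(u_i\otimes\zeta_i)$, with $u_i$ ranging over $U$ and $\zeta_i$ over $U^*$, span $\Lin(U)^{\otimes3}\simeq\Lin(U^{\otimes3})$. So it suffices to evaluate both sides on an arbitrary such simple tensor.

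For the left-hand side, first apply $L_{t_\sigma}$ to $\bigotimes_{i}\iota(u_i\otimes\zeta_i)$; by Lemma \ref{lemma_eval_L} this equals $\bigotimes_{i}\iota(u_{\sigma(i)}\otimes\zeta_i)$. Then apply the form $\bigotimes_{i}\iota^*(v_i\otimes\lambda_i)$: under the identification $\Lin(U)^{*\otimes3}\simeq(\Lin(U)^{\otimes3})^*$ the pairing factors across the tensor product, giving $\prod_{i}\iota^*(v_i\otimes\lambda_i)\bigl(\iota(u_{\sigma(i)}\otimes\zeta_i)\bigr)$, which by Equation (\ref{iota_pairing}) equals $\prod_{i}\lambda_i(u_{\sigma(i)})\,\zeta_i(v_i)$.

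For the right-hand side, evaluating $\bigotimes_{i}\iota^*(v_i\otimes\lambda_{\sigma^{-1}(i)})$ on the same simple tensor $\bigotimes_{i}\iota(u_i\otimes\zeta_i)$, the pairing again factors and Equation (\ref{iota_pairing}) yields $\prod_{i}\lambda_{\sigma^{-1}(i)}(u_i)\,\zeta_i(v_i)$. These two products agree: reindexing the first product by $j=\sigma(i)$ turns $\prod_{i}\lambda_i(u_{\sigma(i)})$ into $\prod_{j}\lambda_{\sigma^{-1}(j)}(u_j)$, and the factors $\zeta_i(v_i)$ are untouched by $\sigma$.

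The computation is routine; the only subtlety — and the reason $\sigma^{-1}$, not $\sigma$, appears on the right — is the reindexing step: the permutation $\sigma$ acts via $L_{t_\sigma}$ on the ``vector slots'' $u_i$ of the variable input, whereas on the right it must act on the ``covector labels'' $\lambda_i$ of the fixed form, and transferring it from one side of the pairing to the other inverts it. I would present exactly this chain of equalities, citing Lemma \ref{lemma_eval_L} and Equation (\ref{iota_pairing}), as the proof.
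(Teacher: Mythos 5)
Your proof is correct and follows essentially the same route as the paper's: reduce by linearity to simple tensors $\bigotimes_i \iota(u_i\otimes\zeta_i)$, apply Lemma \ref{lemma_eval_L} and Equation (\ref{iota_pairing}) to both sides, and close the gap with the reindexing $j=\sigma(i)$. The paper writes the chain as one continuous sequence of equalities from the left-hand side through to the right-hand side, but the content is identical.
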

\begin{proof}
  By linearity, it is enough to verify that both sides agree when
  evaluated on a rank one tensor of the form $\bigotimes_{i=1,2,3}
  \iota(u_i \otimes \zeta_i)$ for some $u_i$ in $U$ and
  $\zeta_i$ in $U^*$. We have:
  \begin{align*}
    & \Biggl(\Biggl(\bigotimes_{i=1,2,3} \iota^*(v_i \otimes
      \lambda_i)\Biggr) \circ
    L_{t_\sigma}\Biggr)\Biggl(\bigotimes_{i=1,2,3} \iota(u_i \otimes
    \zeta_i)\Biggr) & \\
    & = \Biggl(\bigotimes_{i=1,2,3} \iota^*(v_i \otimes
    \lambda_i)\Biggr)\Biggl(\bigotimes_{i=1,2,3} \iota(u_{\sigma(i)}
    \otimes \zeta_i)\Biggr) & \text{by Lemma \ref{lemma_eval_L}} \\
    & = \prod_{i=1,2,3} \iota^*(v_i \otimes
    \lambda_i)(\iota(u_{\sigma(i)} \otimes \zeta_i)) & \\
    & = \prod_{i=1,2,3} \lambda_i(u_{\sigma(i)}) \zeta_i(v_i) &
    \text{by Equation (\ref{iota_pairing}}) \\
    & = \prod_{i=1,2,3} \lambda_{\sigma^{-1}(i)}(u_i) \zeta_i(v_i)
    & \text{by commutativity in $k$} \\
    & = \prod_{i=1,2,3} \iota^*(v_i \otimes
    \lambda_{\sigma^{-1}(i)})(\iota(u_i \otimes \zeta_i)) &
    \text{by Equation (\ref{iota_pairing}}) \\
    & = \Biggl(\bigotimes_{i=1,2,3} \iota^*(v_i \otimes
    \lambda_{\sigma^{-1}(i)})\Biggr)\Biggl(\bigotimes_{i=1,2,3}
    \iota(u_i \otimes \zeta_i)\Biggr).
  \end{align*}
\end{proof}

\section{Main results}

Let us return to the 2-dimensional
vector space $V$ over a field $k$ that we had fixed in the
overview. Let $I$ denote the identity in $\Lin(V)$.

\begin{definition}
  \label{def_Q}
  Let $Q = \Lin(V)/kI$ denote the quotient vector space of
  $\Lin(V)$ by the scalar multiples of identity.
\end{definition}

Note that $\dim Q = (\dim V)^2 - 1 = 3$. The dual $Q^*$ is
identified with the subspace of $\Lin(V)^*$ consisting of those forms $\mu$
that satisfy $\mu(I) = 0$.

\begin{lemma}
  \label{g_volume_form}
  The trilinear form $g$ on $\Lin(V)$ is antisymmetric and passes
  to the quotient $Q$, inducing a volume form on $Q$.
\end{lemma}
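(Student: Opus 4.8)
The plan is to verify three things, after which the conclusion is immediate from a dimension count. The three things are: (i) $g$ is alternating, hence antisymmetric; (ii) $g$ vanishes whenever one of its arguments is a multiple of $I$, so that it descends to a trilinear form $\bar g$ on $Q$; and (iii) $\bar g$ is not the zero form. Granting these, since $\dim Q = 3$ the space $\Lambda^3 Q^*$ of alternating trilinear forms on $Q$ is one-dimensional, so the nonzero alternating form $\bar g$ spans it, i.e.\ is a volume form.

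For (i), I would check directly from the cyclic invariance of the trace that $g(a_1,a_2,a_3)$ vanishes whenever two of the $a_i$ are equal: $g(a,a,c) = \tr(a^2c) - \tr(ca^2) = 0$, $g(a,b,a) = \tr(aba) - \tr(aba) = 0$, and $g(a,b,b) = \tr(ab^2) - \tr(b^2a) = 0$. A multilinear form that vanishes on repeated arguments is alternating, and an alternating form is antisymmetric, as one sees by expanding $g$ on a sum $x+y$ placed in two of its three slots; crucially this last implication is valid over a field of any characteristic, which is why I verify the vanishing-on-repeats property of $g$ directly rather than trying to deduce it from antisymmetry (in characteristic $2$ the two notions differ). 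For (ii), by multilinearity it suffices to consider an argument equal to $I$ itself, and by the antisymmetry from (i) it suffices to place it first: $g(I,a_2,a_3) = \tr(a_2a_3) - \tr(a_3a_2) = 0$. Hence for each fixed $a_2,a_3$ the linear form $a_1 \mapsto g(a_1,a_2,a_3)$ factors through the projection $\pi\colon \Lin(V)\to Q$, and carrying this out in each variable produces a trilinear form $\bar g$ on $Q$ with $g = \bar g\circ(\pi\otimes\pi\otimes\pi)$; since $\pi$ is surjective, $\bar g$ is again alternating.

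For (iii), I would exhibit an explicit triple on which $g$ does not vanish. Fixing a basis $(e_1,e_2)$ of $V$ with dual basis $(\lambda_1,\lambda_2)$, put $a_1 = \iota(e_1\otimes\lambda_1)$, $a_2 = \iota(e_1\otimes\lambda_2)$, $a_3 = \iota(e_2\otimes\lambda_1)$. Applying Equation~(\ref{iota_mul}) twice gives $a_1a_2a_3 = \lambda_1(e_1)\lambda_2(e_2)\,\iota(e_1\otimes\lambda_1) = \iota(e_1\otimes\lambda_1)$, while $a_3a_2a_1 = \lambda_1(e_1)\lambda_2(e_1)\,\iota(e_2\otimes\lambda_1) = 0$ because $\lambda_2(e_1) = 0$; so by Equation~(\ref{iota_trace}), $g(a_1,a_2,a_3) = \tr(\iota(e_1\otimes\lambda_1)) - 0 = \lambda_1(e_1) = 1 \neq 0$. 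Therefore $\bar g \neq 0$, and the conclusion follows as explained above.

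I do not expect a genuine obstacle here; the proof is a short sequence of trace manipulations plus one dimension count. The one point deserving attention is the distinction between ``antisymmetric'' and ``alternating'' in characteristic $2$, which is handled by checking the stronger vanishing-on-repeated-arguments property from the outset. A second, minor, point is that once $g(a_1,a_2,a_3)\neq 0$ the images of $a_1,a_2,a_3$ in $Q$ are automatically linearly independent, so no separate check that they form a basis of $Q$ is needed.
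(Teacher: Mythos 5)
Your proof is correct and follows the same overall route as the paper (trace cyclicity plus a dimension count), but it is noticeably more careful than the paper's own argument. The paper checks antisymmetry via the cyclic property of the trace, verifies vanishing when an argument is a multiple of $I$, and then asserts ``as $\dim Q = 3$, antisymmetric 3-forms on $Q$ are volume forms on $Q$.'' That last sentence is loose on two counts, both of which you address explicitly. First, in characteristic $2$ ``antisymmetric'' does not imply ``alternating'' (vanishing on repeated arguments), and it is the alternating forms on a $3$-dimensional space that form a one-dimensional space; you sidestep this by verifying the vanishing-on-repeats property directly. Second, the zero form is antisymmetric but not a volume form, so one must also check that $\bar g$ is nonzero; you do this with an explicit rank-one triple $a_1=\iota(e_1\otimes\lambda_1)$, $a_2=\iota(e_1\otimes\lambda_2)$, $a_3=\iota(e_2\otimes\lambda_1)$ giving $g(a_1,a_2,a_3)=1$. (In the paper this nonvanishing is never stated but is silently used in Lemma~\ref{prop_decomp_g}, whose proof needs the span of $g$ to be the whole line of alternating $3$-forms; it is also recoverable \emph{a posteriori} from the nonzero value of $\alpha$ computed in Proposition~\ref{prop_decomp_h}.) Your explicit triple and characteristic-$2$ awareness are genuine improvements over the paper's terser argument, not digressions. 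One cosmetic note: your step (ii) invokes antisymmetry to ``place $I$ first,'' but since you have the stronger alternating property you could equally well just check each slot directly as the paper does; either is fine.
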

\begin{proof}
  The antisymmetry follows from the definition of $g$ in Equation
  (\ref{intro_g}) and the cyclic property of the trace,
  $\tr(a_1a_2a_3) = \tr(a_2a_3a_1)$. The claim about passing to the
  quotient is that for $a_1, a_2, a_3$ in $\Lin(V)$, if any of the
  $a_i$ is a scalar multiple of identity, then
  $g(a_1, a_2, a_3) = 0$. This is verified directly, for instance
  if $a_3 = I$ then $g(a_1, a_2, a_3) = \tr(a_1a_2) - \tr(a_2a_1) =
  0$. Finally, as $\dim Q = 3$, antisymmetric 3-forms on $Q$ are
  volume forms on $Q$.
\end{proof}

\begin{lemma}
  \label{prop_decomp_g}
  For any basis $(\mu_1, \mu_2, \mu_3)$ of $Q^*$, letting
  $\varepsilon(\sigma)$ denote the signature of a permutation
  $\sigma$, there exists a scalar $\alpha$ such that
  \begin{equation}
    \label{decomp_g}
    g = \alpha \sum_{\sigma \in S_3} \varepsilon(\sigma)
    \bigotimes_{i=1,2,3} \mu_{\sigma(i)}.
  \end{equation}
\end{lemma}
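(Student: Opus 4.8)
The plan is to invoke the one-dimensionality of the space of volume forms on the $3$-dimensional space $Q$ — the ``abstract Cavalieri'' principle mentioned in the introduction — and to exhibit both sides of (\ref{decomp_g}) as elements of that line.

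First, a bookkeeping step. Each $\mu_i$ lies in $Q^* \subseteq \Lin(V)^*$, hence vanishes on $I$, so the trilinear form $\omega := \sum_{\sigma \in S_3} \varepsilon(\sigma) \bigotimes_{i=1,2,3} \mu_{\sigma(i)}$ on $\Lin(V)$ factors through the quotient map $\Lin(V) \to Q$ and may be regarded as a trilinear form on $Q$. By Lemma \ref{g_volume_form}, the same is true of $g$. It therefore suffices to prove $g = \alpha\,\omega$ as trilinear forms on $Q$.

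Second, I would check that $\omega$ is a volume form on $Q$. Evaluated at $(q_1, q_2, q_3) \in Q^3$ it equals $\det\big(\mu_j(q_i)\big)_{i,j}$, so it is an alternating trilinear form, vanishing whenever two of the $q_i$ coincide since a determinant with two equal rows vanishes. Moreover $\omega \neq 0$: if $(e_1, e_2, e_3)$ is the basis of $Q$ dual to $(\mu_1, \mu_2, \mu_3)$, then $\big(\mu_j(e_i)\big)_{i,j}$ is the identity matrix, so $\omega(e_1, e_2, e_3) = 1$. Combined with Lemma \ref{g_volume_form}, both $g$ and $\omega$ thus lie in the space of alternating trilinear forms on $Q$, which is one-dimensional because $\dim Q = 3$ (equivalently, such a form is determined by its value on a single ordered basis). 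Since $\omega$ is a nonzero element of this line and $g$ also lies in it, there is a unique scalar $\alpha$ — namely $\alpha = g(e_1, e_2, e_3)$ — with $g = \alpha\,\omega$, which is (\ref{decomp_g}).

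There is no genuinely hard step here; the only points demanding care are the quotient bookkeeping of the first paragraph and, if one wants the statement over a field of characteristic $2$, making sure one works with \emph{alternating} forms rather than merely antisymmetric ones — both $g$ (since $g(a,a,b) = \tr(a^2b) - \tr(ba^2) = 0$ by cyclicity of the trace) and $\omega$ (as a determinant) are in fact alternating, so the argument is characteristic-free. As an alternative to the exterior-algebra language, one can argue purely in coordinates: fix a basis of $Q$, expand $g$ and $\omega$, and use alternation to reduce each to the single component attached to that basis.
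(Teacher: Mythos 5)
Your proof is correct and takes essentially the same route as the paper: both reduce to the one-dimensionality of the space of volume forms on the $3$-dimensional quotient $Q$, citing Lemma \ref{g_volume_form} for $g$ and observing that the antisymmetrized tensor product lies in that same line. You make explicit two points the paper leaves implicit --- that $\omega$ is nonzero (by evaluating on the dual basis, or equivalently via the determinant formula $\omega(q_1,q_2,q_3)=\det(\mu_j(q_i))$) and that both $g$ and $\omega$ are genuinely \emph{alternating} rather than merely antisymmetric, so the argument is characteristic-free.
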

\begin{proof}
  As the space of volume forms on $Q$ is one-dimensional, and by
  Lemma \ref{g_volume_form} we already know that $g$ is a volume
  form on $Q$, it is enough to check that the right-hand side is
  antisymmetric. That is true by construction, that expression
  being known as an antisymmetrized tensor product.
\end{proof}

\begin{remark}
  \label{remark_compute_alpha}
  The constant $\alpha$ in Lemma \ref{prop_decomp_g} can be
  computed by picking any $c_1, c_2, c_3$ in
  $\Lin(V)$ such that $g(c_1, c_2, c_3) = 1$ and using Equation
  (\ref{decomp_g})
  as a definition of $\alpha^{-1}$:
  \begin{equation}
    \label{expr_alpha}
    \alpha^{-1} = \sum_{\sigma\in S_3} \varepsilon(\sigma)
    \prod_{i=1,2,3} \mu_{\sigma(i)}(c_i).
  \end{equation}
\end{remark}

\begin{lemma}
  \label{lemma_g_h_t_sigma_star}
  The following equalities hold between trilinear forms on $\Lin(V)$:
  \begin{eqnarray}
    g & = & t_{(1\,2\,3)}^* - t_{(3\,2\,1)}^*, \label{g_t_sigma_star}\\
    h & = & t_\id^* - t_{(1\,2\,3)}^*. \label{h_t_sigma_star}
  \end{eqnarray}
\end{lemma}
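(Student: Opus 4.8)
The plan is to obtain both identities by evaluating the right-hand sides on decomposable tensors and invoking Lemma~\ref{lemma_t_sigma_star}. Recall from the overview that $g$ and $h$ are the trilinear forms on $\Lin(V)$ defined by $g(a_1,a_2,a_3) = \tr(a_1a_2a_3) - \tr(a_3a_2a_1)$ (Equation~(\ref{intro_g})) and $h(a_1,a_2,a_3) = \tr(a_1)\tr(a_2)\tr(a_3) - \tr(a_1a_2a_3)$ (Equation~(\ref{intro_h})). Under the identifications recalled at the start of Section~3, a trilinear form on $\Lin(V)$ is the same datum as an element of $(\Lin(V)^{\otimes 3})^* \simeq \Lin(V^{\otimes 3})^*$, which is exactly the space in which $t_\id^*$, $t_{(123)}^*$ and $t_{(321)}^*$ live; and two such forms coincide if and only if they agree on every decomposable tensor $a_1\otimes a_2\otimes a_3$, equivalently on every triple $(a_1,a_2,a_3)$ in $\Lin(V)^3$.

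First I would evaluate the right-hand side of~(\ref{g_t_sigma_star}) on an arbitrary $a_1\otimes a_2\otimes a_3$: by Lemma~\ref{lemma_t_sigma_star} it equals $\tr(a_1a_2a_3) - \tr(a_3a_2a_1)$, which is precisely $g(a_1,a_2,a_3)$, so~(\ref{g_t_sigma_star}) holds. Likewise, the right-hand side of~(\ref{h_t_sigma_star}) evaluated on $a_1\otimes a_2\otimes a_3$ equals, again by Lemma~\ref{lemma_t_sigma_star}, $\tr(a_1)\tr(a_2)\tr(a_3) - \tr(a_1a_2a_3)$, which is precisely $h(a_1,a_2,a_3)$, so~(\ref{h_t_sigma_star}) holds as well.

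There is no real obstacle here: the substance of the lemma is entirely contained in Lemma~\ref{lemma_t_sigma_star}, and the only point meriting a little care is the bookkeeping of identifications — namely confirming that the trilinear form $g$ (resp.\ $h$) corresponds to the element of $\Lin(V^{\otimes 3})^*$ whose value on $a_1\otimes a_2\otimes a_3$ is $g(a_1,a_2,a_3)$ (resp.\ $h(a_1,a_2,a_3)$), which is immediate from the way the identification $U^{*\otimes n}\simeq (U^{\otimes n})^*$ was spelled out in Section~3 (applied with $U=\Lin(V)$) together with the dual of $\Lin(V)^{\otimes 3}\simeq\Lin(V^{\otimes 3})$.
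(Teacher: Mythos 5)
Your proof is correct and takes exactly the paper's approach: the paper's one-line proof simply invokes Lemma~\ref{lemma_t_sigma_star} together with the defining Equations~(\ref{intro_g}) and~(\ref{intro_h}), and your evaluation on decomposable tensors is just the unfolding of that invocation.
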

\begin{proof}
  This follows readily from Lemma \ref{lemma_t_sigma_star} and the
  definitions of $g$ and $h$ in Equations (\ref{intro_g}, \ref{intro_h}).
\end{proof}

\begin{lemma}
  \label{relation_g_h}
  The following equality holds between forms in $\Lin(V^{\otimes3})^*$:
  $$h = g \circ L_{t_{(3\,2\,1)}}.$$
\end{lemma}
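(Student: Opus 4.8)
The plan is to reduce everything to the identities already established for the $t_\sigma^*$ and for composition with the left-multiplication maps $L_{t_\sigma}$. By Lemma \ref{lemma_g_h_t_sigma_star} we have $g = t_{(123)}^* - t_{(321)}^*$, so by linearity it suffices to understand $t_\tau^* \circ L_{t_{(321)}}$ for $\tau = (123)$ and $\tau = (321)$ and check that the difference equals $h = t_\id^* - t_{(123)}^*$. The natural way to compute $t_\tau^* \circ L_{t_{(321)}}$ is to observe that this is again a form of the type $t_\rho^*$ for a suitable permutation $\rho$, and to pin down $\rho$. For this I would use Lemma \ref{lemma_eval_composition_with_L_t_sigma}, applied to the simple tensors $v_i \otimes \lambda_i$ whose $\iota^*$-images span $\Lin(V)^*$: since $t_\tau^*$ expands (via Lemma \ref{lemma_t_sigma_star} and its proof) as an antisymmetrized-or-permuted product of the form $\bigotimes_i \iota^*(v_i \otimes \lambda_i)$ composed with a permutation of the $\lambda$'s, composing on the right with $L_{t_{(321)}}$ relabels the $\lambda$'s by $\sigma^{-1}$ with $\sigma = (321)$, i.e.\ by $(321)^{-1} = (123)$.

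Concretely, I expect the cleanest route is to evaluate both sides at a general tensor $a_1 \otimes a_2 \otimes a_3$ with $a_i = \iota(v_i \otimes \lambda_i)$, where it is enough to treat this rank-one case by linearity. On the one hand, $(g \circ L_{t_{(321)}})(a_1 \otimes a_2 \otimes a_3)$ unwinds, using the computation in the proof of Lemma \ref{lemma_t_sigma_star} together with Lemma \ref{lemma_eval_L}, to an expression in the scalars $\lambda_i(v_{\rho(i)})$; on the other hand, $h(a_1 \otimes a_2 \otimes a_3) = \lambda_1(v_1)\lambda_2(v_2)\lambda_3(v_3) - \lambda_1(v_1)\lambda_2(v_3)\lambda_3(v_2)$ or the analogous product (again via Lemma \ref{lemma_t_sigma_star}). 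Matching the two amounts to verifying that left-multiplication by $t_{(321)}$ has the effect of precomposing the index permutation appropriately, which is exactly what Lemma \ref{lemma_eval_composition_with_L_t_sigma} encodes once one tracks whether the permutation acts on the $v$-slots or the $\lambda$-slots.

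The main obstacle is purely bookkeeping: getting the direction of the permutation right. The form $g$ involves $t_{(123)}^* - t_{(321)}^*$, composition with $L_{t_{(321)}}$ introduces a further $(321)$, and the net permutations must land on exactly $\id$ and $(123)$ to reproduce $h = t_\id^* - t_{(123)}^*$. Since $(321) = (123)^{-1}$ in $S_3$ and these are the only nontrivial $3$-cycles, the arithmetic is forced: $(123)\cdot(321) = \id$ and $(321)\cdot(321) = (123)$ (or the mirror-image composition, depending on the convention in Definition \ref{definition_t_sigma} for how $t_\sigma$ composes), so $g \circ L_{t_{(321)}} = t_\id^* - t_{(123)}^* = h$. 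I would state the permutation convention explicitly at the start and then let Lemma \ref{lemma_eval_composition_with_L_t_sigma} do the work, so that the proof is three or four lines of applying named lemmas rather than a coefficient computation.
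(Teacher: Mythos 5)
Your proposed route is correct in substance, but it differs from the paper's proof in one key respect: you never invoke Lemma~\ref{star_product}, which is exactly the tool the paper uses to make the argument a four-line algebraic computation. The paper's proof writes $h = (t_\id - t_{(123)})^*$, factors $t_\id - t_{(123)} = (t_{(123)} - t_{(321)})\cdot t_{(321)}$ using the permutation arithmetic you correctly identify ($t_{(123)}t_{(321)} = t_\id$ and $t_{(321)}^2 = t_{(123)}$, via $t_\sigma t_\tau = t_{\tau\sigma}$), and then applies Lemma~\ref{star_product}, $(ab)^* = a^*\circ L_b$, to convert that factoring into $h = (t_{(123)} - t_{(321)})^*\circ L_{t_{(321)}} = g\circ L_{t_{(321)}}$. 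Your approach instead reduces to rank-one simple tensors $a_i = \iota(v_i\otimes\lambda_i)$ and tracks indices by hand, using Lemma~\ref{lemma_eval_L} and the explicit formulas of Lemma~\ref{lemma_t_sigma_star}. That does work: both sides of the claimed identity are multilinear, so agreement on simple tensors suffices, and the unwinding you sketch produces $\lambda_1(v_1)\lambda_2(v_2)\lambda_3(v_3) - \lambda_1(v_2)\lambda_2(v_3)\lambda_3(v_1)$ on both sides (note the second term in your paragraph two is slightly misstated, though you hedge). However, your invocation of Lemma~\ref{lemma_eval_composition_with_L_t_sigma} is a mild misattribution: that lemma is about composing $\bigotimes_i\iota^*(v_i\otimes\lambda_i)$ with $L_{t_\sigma}$, not about composing $t_\tau^*$ with $L_{t_\sigma}$; to use it to conclude $t_\tau^*\circ L_{t_\sigma} = t_{\sigma\tau}^*$ you would first have to expand $t_\tau^*$ in a basis of simple tensors, which is more roundabout than just invoking Lemma~\ref{star_product}. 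Net assessment: same conclusion, correct permutation bookkeeping, but a more computational route; recognizing that Lemma~\ref{star_product} gives $t_\tau^*\circ L_{t_\sigma} = (t_\tau t_\sigma)^*$ in one step would collapse your argument to the paper's.
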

\begin{proof}
  We have
  \begin{align*}
    h & = (t_\id - t_{(1\,2\,3)})^* & \text{by Equation
    (\ref{h_t_sigma_star})} \\
    & = ((t_{(1\,2\,3)} - t_{(3\,2\,1)}) \cdot t_{(3\,2\,1)})^* & \\
    & = (t_{(1\,2\,3)} - t_{(3\,2\,1)})^* \circ L_{t_{(3\,2\,1)}} & \text{by
    Lemma \ref{star_product}} \\
    & = g \circ L_{t_{(3\,2\,1)}} & \text{by Equation
    (\ref{g_t_sigma_star}).\qedhere}
  \end{align*}
\end{proof}
While Lemma \ref{prop_decomp_g} allowed arbitrary linear forms
$\mu_i$, Proposition \ref{prop_decomp_h} will need to restrict to
\emph{rank one} forms, meaning the $\iota^*(v\otimes\lambda)$ for $v$
in $V$ and $\lambda$ in $V^*$. The necessity of that
restriction is discussed in Remark
\ref{remark_rank_one_necessary}.

\begin{lemma}
  \label{lemma_eqconv_bases}
  For $i=1,2,3$, let $v_i$ be a nonzero vector in $V$, let
  $\lambda_i$ be a nonzero linear form on $V$ such that
  $\lambda_i(v_i)=0$, and let $\mu_i = \iota^*(v_i\otimes\lambda_i)$.
  The following conditions are equivalent:
  \begin{enumerate}[nosep]
    \item \label{vnoncol} The vectors $v_1, v_2, v_3$ are pairwise
      noncolinear: $i\neq j\Rightarrow v_i\not\in\linspan(v_j)$.
    \item \label{lnoncol} The forms $\lambda_1, \lambda_2, \lambda_3$
      are pairwise noncolinear: $i\neq j\Rightarrow
      \lambda_i\not\in\linspan(\lambda_j)$.
    \item \label{muindep} The forms $\mu_1, \mu_2, \mu_3$ are
      linearly independent.
    \item \label{mubasis} The family $(\mu_1, \mu_2, \mu_3)$ is a
      basis of $Q^*$.
  \end{enumerate}
\end{lemma}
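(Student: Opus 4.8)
The plan is to route all four equivalences through statements about lines (one-dimensional subspaces) in $V$. I would first handle (\ref{vnoncol}) $\Leftrightarrow$ (\ref{lnoncol}): since $\dim V = 2$, the kernel of the nonzero form $\lambda_i$ is a line, and the hypotheses $\lambda_i(v_i)=0$, $v_i\neq 0$ force $\ker\lambda_i=\linspan(v_i)$. Thus $v_i$ and $\lambda_i$ each encode the line $\linspan(v_i)$, being determined by it up to a nonzero scalar, so for $i\neq j$ we have $v_i\in\linspan(v_j)\iff\linspan(v_i)=\linspan(v_j)\iff\ker\lambda_i=\ker\lambda_j\iff\lambda_i\in\linspan(\lambda_j)$. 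The equivalence (\ref{muindep}) $\Leftrightarrow$ (\ref{mubasis}) is then immediate: from Definition \ref{definition_iota_star}, $\mu_i(I)=\lambda_i(I(v_i))=\lambda_i(v_i)=0$, so each $\mu_i\in Q^*$, and since $\dim Q^*=3$ a triple in $Q^*$ is linearly independent exactly when it is a basis.

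It remains to prove (\ref{vnoncol}) $\Leftrightarrow$ (\ref{muindep}). For $\neg$(\ref{vnoncol}) $\Rightarrow \neg$(\ref{muindep}): if $v_i$ and $v_j$ are colinear for some $i\neq j$, then by the previous paragraph $\lambda_i$ and $\lambda_j$ are colinear too, so $v_i\otimes\lambda_i$ and $v_j\otimes\lambda_j$ are colinear in $V\otimes V^*$; applying the linear isomorphism $\iota^*$ shows $\mu_i$ and $\mu_j$ are colinear, hence $(\mu_1,\mu_2,\mu_3)$ is linearly dependent. For the converse, assume the $v_i$ pairwise noncolinear. Then $(v_1,v_2)$ is a basis of $V$, and $v_3=pv_1+qv_2$ with $p\neq0$ and $q\neq0$, since otherwise $v_3$ would be colinear with $v_2$ or with $v_1$. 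In the dual basis $(v_1^*,v_2^*)$ the condition $\lambda_i(v_i)=0$ makes $\lambda_1$ a nonzero multiple of $v_2^*$, $\lambda_2$ a nonzero multiple of $v_1^*$, and $\lambda_3$ a nonzero multiple of $qv_1^*-pv_2^*$. Writing $a\in\Lin(V)$ as the matrix $(a_{ij})$ in the basis $(v_1,v_2)$ and evaluating $\mu_i(a)=\lambda_i(a(v_i))$, one finds that $\mu_1(a)$, $\mu_2(a)$, $\mu_3(a)$ are nonzero scalar multiples of $a_{21}$, of $a_{12}$, and of $pq\,a_{11}+q^2a_{12}-p^2a_{21}-pq\,a_{22}$ respectively. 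Reading off successively the coefficients of $a_{11}$ (which occurs only in $\mu_3$), then of $a_{21}$ (then only in $\mu_1$) and of $a_{12}$ (then only in $\mu_2$) forces any linear relation among $\mu_1,\mu_2,\mu_3$ to be trivial, so they are linearly independent.

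The one genuine computation is this last one, and it is short once one has picked the basis $(v_1,v_2)$ of $V$; this is the only place I expect any friction. A more conceptual route is available: by Equation (\ref{iota_mul}) and $\lambda_i(v_i)=0$, each $\iota(v_i\otimes\lambda_i)$ is a nonzero rank-one nilpotent, two such are colinear if and only if they have the same image line $\linspan(v_i)$, and three nonzero points of the nilpotent quadric cone inside the $3$-dimensional space of trace-zero operators are linearly dependent if and only if two of them lie on a common ruling. However, making that argument rigorous over an arbitrary field (in particular in characteristic $2$) takes more care than the direct computation, so I would keep the computation as the main line of proof.
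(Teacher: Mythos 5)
Your proof is correct, and it parallels the paper's on most of the equivalences: you both use the key observation $\linspan(v_i)=\ker(\lambda_i)$ to link \ref{vnoncol} and \ref{lnoncol}, and you both get \ref{muindep}$\Leftrightarrow$\ref{mubasis} by noting $\mu_i\in Q^*$ and $\dim Q^*=3$. The genuine difference is in the direction \ref{vnoncol}$\Rightarrow$\ref{muindep}. The paper argues by contradiction using tensor rank: given a nontrivial relation $\sum_i\alpha_i\mu_i=0$, since each $\mu_i\neq 0$ at most one $\alpha_i$ vanishes, so one can isolate $\alpha_i\mu_i=\alpha_j\mu_j+\alpha_l\mu_l$ with $\alpha_j,\alpha_l\neq 0$; the left side has rank at most one, and since in this family $v_j$ determines $\lambda_j$ up to scalar, the two rank-one summands must then be colinear, forcing $v_j,v_l$ colinear. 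You instead choose the basis $(v_1,v_2)$, write $v_3=pv_1+qv_2$ with $p,q\neq 0$, compute $\mu_1(a),\mu_2(a),\mu_3(a)$ explicitly as (nonzero multiples of) $a_{21}$, $a_{12}$, and $pq\,a_{11}+q^2a_{12}-p^2a_{21}-pq\,a_{22}$, and read off independence by triangularity in the $a_{11},a_{21},a_{12}$ coefficients. Both work; the paper's rank argument is shorter and avoids a choice of basis, though it silently relies on the implication ``sum of two of these rank-one forms has rank $\leq 1$ implies they are colinear,'' which is specific to rank-one forms with $\linspan(v_i)=\ker\lambda_i$ and deserves a sentence of justification. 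Your computation is longer but entirely self-contained, and it makes visible exactly where the hypothesis $p,q\neq 0$ (i.e.\ pairwise noncolinearity) is used. Your closing remark correctly identifies the more geometric picture (nilpotent cone and its rulings), which is in fact closer in spirit to the paper's rank argument; your decision to favor the explicit computation over that route in arbitrary characteristic is reasonable.
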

\begin{proof}
  $\ref{muindep}\Leftrightarrow\ref{mubasis}$ holds because the
  hypothesis $\lambda_i(v_i)=0$ is equivalent to $\mu_i\in Q^*$, and
  $\dim Q^*=3$. To prove the other implications, notice that for each
  $i$, we have
  $$\linspan(v_i)=\ker(\lambda_i),$$
  since $\lambda_i(v_i)=0$ means that
  $\linspan(v_i)\subset\ker(\lambda_i)$, and as $\dim V = 2$, we have
  $\dim \ker(\lambda_i)=1$ hence the inclusion is an equality. This
  means in particular that for each i, j,
  \begin{equation*}
    \text{$v_i, v_j$ are colinear $\Leftrightarrow$ $\lambda_i,
      \lambda_j$ are colinear
    $\Leftrightarrow$ $\mu_i, \mu_j$ are colinear}.
  \end{equation*}
  This readily proves the implications
  $\ref{muindep}\Rightarrow\ref{vnoncol}\Leftrightarrow\ref{lnoncol}$.
  Let us prove $\ref{vnoncol}\Rightarrow\ref{muindep}$. Suppose that
  there exists scalars $\alpha_i$ such that $\sum_i\alpha_i\mu_i=0$.
  As the $\mu_i$ are nonzero, at most one of the $\alpha_i$ can be
  zero. Thus, for some distinct indices $i,j,l$, we have
  $\alpha_i\mu_i=\alpha_j\mu_j + \alpha_l\mu_l$ with $\alpha_j\neq 0$
  and $\alpha_l\neq 0$. It follows that $\alpha_j\mu_j
  + \alpha_l\mu_l$ has rank at most one, so $\alpha_j\mu_j$ and
  $\alpha_l\mu_l$ are colinear, so $\mu_j$ and $\mu_l$ are colinear,
  so $v_j$ and $v_l$ are colinear.
\end{proof}

\begin{proposition}
  \label{prop_decomp_h}
  For any vectors $v_1, v_2, v_3$ in $V$ and linear forms $\lambda_1,
  \lambda_2, \lambda_3$ on $V$ satisfying the equivalent conditions of Lemma
  \ref{lemma_eqconv_bases}, we have:
  \begin{equation}
    \label{decomp_h}
    h = \frac{-1}{\lambda_1(v_2)\lambda_2(v_3)\lambda_3(v_1)}
    \sum_{\sigma\in S_3}
    \varepsilon(\sigma) \bigotimes_{i=1,2,3} \iota^*(v_{\sigma(i)}
    \otimes \lambda_{\sigma\cdot(1\,2\,3)(i)}).
  \end{equation}
\end{proposition}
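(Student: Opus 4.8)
The plan is to transport the decomposition of $g$ from Lemma \ref{prop_decomp_g} across the identity $h = g \circ L_{t_{(321)}}$ of Lemma \ref{relation_g_h}. Since our hypotheses are precisely the equivalent conditions of Lemma \ref{lemma_eqconv_bases}, the family $\mu_i := \iota^*(v_i \otimes \lambda_i)$ is a basis of $Q^*$, so Lemma \ref{prop_decomp_g} gives a scalar $\alpha$ with $g = \alpha \sum_{\sigma \in S_3} \varepsilon(\sigma) \bigotimes_i \iota^*(v_{\sigma(i)} \otimes \lambda_{\sigma(i)})$. Composing on the right with $L_{t_{(321)}}$ and applying Lemma \ref{lemma_eval_composition_with_L_t_sigma} to each summand — with the roles of the vectors and forms played by $v_{\sigma(i)}$ and $\lambda_{\sigma(i)}$, and the permutation being $(321)$ — turns the $\sigma$-th term into $\bigotimes_i \iota^*(v_{\sigma(i)} \otimes \lambda_{\sigma((321)^{-1}(i))})$. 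As $(321)^{-1} = (123)$ and $\sigma \circ (123)$ is what the statement abbreviates as $\sigma(123)$, this is $\bigotimes_i \iota^*(v_{\sigma(i)} \otimes \lambda_{\sigma(123)(i)})$, so Equation (\ref{decomp_h}) already follows up to replacing the claimed constant by $\alpha$. The point to be careful about is the index bookkeeping: it is the inverse permutation $(123)$, acting only on the $\lambda$-slots, that appears.

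It remains to identify $\alpha$. Following Remark \ref{remark_compute_alpha}, I would evaluate the identity $g = \alpha \sum_\sigma \varepsilon(\sigma) \bigotimes_i \mu_{\sigma(i)}$ at the triple $(p_1, p_2, p_3)$ with $p_i := \iota(v_i \otimes \lambda_i)$. For the left-hand side, Equations (\ref{iota_mul}) and (\ref{iota_trace}) give $\tr(p_1 p_2 p_3) = \lambda_1(v_2)\lambda_2(v_3)\lambda_3(v_1)$ and $\tr(p_3 p_2 p_1) = \lambda_3(v_2)\lambda_2(v_1)\lambda_1(v_3)$; write $A$ and $B$ for these two products, so that $g(p_1, p_2, p_3) = A - B$. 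For the right-hand side, Equation (\ref{iota_pairing}) (equivalently, the commuting diagram of Lemma \ref{diagram_iota_star_commutes}) gives $\mu_k(p_\ell) = \lambda_k(v_\ell)\lambda_\ell(v_k)$, so in particular $\mu_i(p_i) = \lambda_i(v_i)^2 = 0$; hence only the two fixed-point-free permutations, namely the $3$-cycles, survive in $\sum_\sigma \varepsilon(\sigma) \prod_i \mu_{\sigma(i)}(p_i)$, and a short rearrangement of factors shows that each of them contributes exactly $AB$. Thus the identity becomes $A - B = 2\alpha AB$.

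The crux — and the step I expect to be the real obstacle — is the identity $B = -A$, which is where $\dim V = 2$ enters essentially. I would prove it by fixing a nonzero alternating bilinear form $\omega$ on $V$: since $\lambda_i(v_i) = 0$ and $\ker \lambda_i$ is one-dimensional (as $\lambda_i \neq 0$ and $\dim V = 2$), hence equals $\linspan(v_i)$, there is a scalar $c_i \neq 0$ with $\lambda_i = c_i\,\omega(v_i, \cdot)$, whence $\lambda_i(v_j) = c_i\,\omega(v_i, v_j)$; substituting into $A$ and $B$ and using $\omega(v_j, v_i) = -\omega(v_i, v_j)$ collapses $B$ to $-A$. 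Granting this, $A \neq 0$ (each of its three factors is nonzero because $v_i, v_j$ are pairwise noncolinear and $\ker \lambda_i = \linspan(v_i)$), the identity reads $2A = -2\alpha A^2$, and therefore $\alpha = -1/A = -1/(\lambda_1(v_2)\lambda_2(v_3)\lambda_3(v_1))$, which is exactly Equation (\ref{decomp_h}). I would additionally want to treat the characteristic $2$ case with separate care, since there the symmetric test triple $(p_1, p_2, p_3)$ makes both sides of $A - B = 2\alpha AB$ vanish and $\alpha$ must be pinned down using a less symmetric triple.
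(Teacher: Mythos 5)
Your transport of the decomposition of $g$ across $h = g \circ L_{t_{(321)}}$ via Lemmas \ref{prop_decomp_g}, \ref{relation_g_h} and \ref{lemma_eval_composition_with_L_t_sigma}, including the bookkeeping that turns $\lambda_{\sigma(i)}$ into $\lambda_{\sigma(123)(i)}$, is exactly the paper's argument; that part is fine.

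Where you diverge is the determination of $\alpha$, and the divergence costs you. You pin down $\alpha$ by evaluating the \emph{$g$-side} identity (\ref{decomp_g}) at the triple $p_i = \iota(v_i\otimes\lambda_i)$, which forces you through $g(p_1,p_2,p_3) = A - B$, two surviving $3$-cycles contributing $2\alpha AB$, and the extra lemma $B = -A$ (which you establish with an auxiliary alternating form $\omega$ — a second, independent invocation of the ``$\dim = 2$'' volume-form mechanism). The resulting relation $2A = -2\alpha A^2$ has an overall factor of $2$, and as you yourself flag, the argument collapses in characteristic $2$. The paper works over an arbitrary field and does not make that assumption, so this is a genuine gap, not just a stylistic issue; your suggested remedy (a less symmetric test triple) is also not what the paper does.

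The paper avoids all of this by evaluating the already-transported \emph{$h$-side} identity (\ref{decomp_h_alpha}) at the same triple $p_i$. Because $\tr(p_i) = \lambda_i(v_i) = 0$, the left-hand side is simply $h(p_1,p_2,p_3) = -\tr(p_1p_2p_3) = -A$, with no need for $B$ or for $B = -A$. On the right-hand side, the $i$-th factor in the product is $\lambda_{\sigma(123)(i)}(v_i)\,\lambda_i(v_{\sigma(i)})$, a product of \emph{two} pairings, so a summand vanishes unless \emph{both} $\sigma$ and $\sigma(123)$ are fixed-point-free; among derangements, this singles out $\sigma = (123)$ alone (for $\sigma=(321)$, $\sigma(123)=\id$ has fixed points). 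That unique term gives $\alpha A^2$, and $-A = \alpha A^2$ immediately yields $\alpha = -1/A$ with no factor of $2$ and no case split. In short: evaluating $h$ rather than $g$ at the rank-one triple replaces your two surviving permutations and the identity $B=-A$ with a single surviving permutation and nothing else, and thereby removes the characteristic-$2$ obstruction you correctly anticipated.
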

\begin{proof}
  Let us first explain why the denominator
  $\lambda_1(v_2)\lambda_2(v_3)\lambda_3(v_1)$ is nonzero. Because of
  condition \ref{vnoncol} in Lemma \ref{lemma_eqconv_bases}, whenever
  $i\neq j$, the vector $v_j$ cannot belong to the one-dimensional
  space $\ker(\lambda_i) = \linspan(v_i)$, so $\lambda_i(v_j)\neq 0$,
  so  $\lambda_1(v_2)\lambda_2(v_3)\lambda_3(v_1)\neq 0$.

  Let us now prove Equation (\ref{decomp_h}) up to a scalar factor
  $\alpha$. Let $\mu_i = \iota^*(v_i\otimes\lambda_i)$. Lemma
  \ref{lemma_eqconv_bases} says that the $\mu_i$ form a basis of
  $Q^*$, so we can apply Lemma \ref{prop_decomp_g} with that basis to obtain
  \begin{equation*}
    g = \alpha \sum_{\sigma \in S_3} \varepsilon(\sigma)
    \bigotimes_{i=1,2,3}  \iota^*(v_{\sigma(i)}  \otimes \lambda_{\sigma(i)})
  \end{equation*}
  for some scalar $\alpha$. Lemma \ref{relation_g_h} transforms that into
  \begin{equation*}
    h = \alpha \sum_{\sigma\in S_3}
    \varepsilon(\sigma) \Biggl(\bigotimes_{i=1,2,3} \iota^*(v_{\sigma(i)}
    \otimes \lambda_{\sigma(i)})\Biggr) \circ L_{t_{(3\,2\,1)}},
  \end{equation*}
  which Lemma
  \ref{lemma_eval_composition_with_L_t_sigma} transforms into
  \begin{equation}
    \label{decomp_h_alpha}
    h = \alpha \sum_{\sigma\in S_3}
    \varepsilon(\sigma) \bigotimes_{i=1,2,3} \iota^*(v_{\sigma(i)}
    \otimes \lambda_{\sigma\cdot(1\,2\,3)(i)}).
  \end{equation}

  There only remains to evaluate the scalar $\alpha$. Let $a_i=\iota(v_i \otimes
  \lambda_i)$.
  Notice that $\tr(a_i)=\lambda_i(v_i)=0$, so
  \begin{equation}
    \label{eval_alpha_lhs}
    h(a_1, a_2, a_3) = -\tr(a_1a_2a_3) =
    -\lambda_1(v_2)\lambda_2(v_3)\lambda_3(v_1).
  \end{equation}
  On the other hand, evaluating Equation (\ref{decomp_h_alpha}) and
  simplifying that using Equation (\ref{iota_pairing}) yields
  \begin{equation}
    \label{eval_alpha_rhs}
    h(a_1, a_2, a_3) = \alpha \sum_{\sigma\in S_3}
    \varepsilon(\sigma) \prod_{i=1,2,3}
    \lambda_{\sigma\cdot(1\,2\,3)(i)}(v_i)\lambda_i(v_{\sigma(i)}).
  \end{equation}
  Since $\lambda_i(v_i)=0$, the product in Equation (\ref{eval_alpha_rhs})
  vanishes whenever $\sigma$ has a fixed point or $\sigma\cdot(1\,2\,3)$ has a
  fixed point. Thus the only $\sigma$ contributing to the sum is
  $\sigma=(1\,2\,3)$. Thus, Equation (\ref{eval_alpha_rhs})
  simplifies to
  \begin{equation}
    \label{eval_alpha_rhs_123}
    h(a_1, a_2, a_3) = \alpha  \prod_{i=1,2,3}
    \lambda_{(3\,2\,1)(i)}(v_i)\lambda_i(v_{(1\,2\,3)(i)}).
  \end{equation}
  further simplifying as
  $$h(a_1, a_2, a_3) = \alpha (\lambda_1(v_2)\lambda_2(v_3)\lambda_3(v_1))^2.$$
  Combining that with Equation (\ref{eval_alpha_lhs}) yields
  $$\alpha=\frac{-1}{\lambda_1(v_2)\lambda_2(v_3)\lambda_3(v_1)}\cdot$$
\end{proof}

\begin{remark}
  \label{remark_rank_one_necessary}
  Two tensors $p,q$ in $\Lin(V)^{*\otimes 3}$ related to each other
  in the same way as $g$ and $h$ are related by Lemma
  \ref{relation_g_h}, namely $q = p \circ L_{t_{(3\,2\,1)}}$, may still
  fail to have the same tensor rank if their tensor decompositions
  involve linear terms in $\Lin(V)^*$ that are not of rank one.
\end{remark}
\begin{proof}
  Consider the counterexample of $p = t_{(1\,2\,3)}^*$ and $q =
  t_{\id}^*$. The same argument as in the proof of Lemma
  \ref{relation_g_h} yields $q = p \circ L_{t_{(3\,2\,1)}}$. As noted in
  Lemma \ref{lemma_t_sigma_star}, for $a_1, a_2, a_3$ in $\Lin(V)$,
  we have $p(a_1, a_2, a_3) = \tr(a_1a_2a_3)$ and
  $q(a_1a_2a_3)=\tr(a_1)\tr(a_2)\tr(a_3)$. Thus, as tensors in
  $\Lin(V)^{*\otimes 3}$, $q$ has rank one but $p$ does not.
\end{proof}

To elaborate on the previous remark, the linear form $a\mapsto\tr(a)$
does not have rank one, so even though $q$ has rank one as a tensor
of order 3 in $\Lin(V)^{*\otimes 3}$, it does not have rank one as a
tensor of order 6 in $(V\otimes V^*)^{\otimes 3}$, and our tool for
transporting tensor decompositions, Lemma
\ref{lemma_eval_composition_with_L_t_sigma}, applies to tensors of
order 6 in $(V\otimes V^*)^{\otimes 3}$.

\section{Strassen algorithms}

Proposition \ref{prop_decomp_h} is already a form of Strassen's
algorithm, but that may be obscured by the tensor formalism, so let us derive a
few more concrete statements as corollaries.

\begin{corollary}
  \label{concrete_decomp_h}
  For any vectors $v_1, v_2, v_3$ in $V$ and linear forms $\lambda_1,
  \lambda_2, \lambda_3$ on $V$ satisfying the equivalent conditions of Lemma
  \ref{lemma_eqconv_bases}, for all $a_1, a_2, a_3$ in
  $\Lin(V)$,
  \begin{multline*}
    \tr(a_1a_2a_3)  = \tr(a_1)\tr(a_2)\tr(a_3) \\
    + \frac{1}{\lambda_1(v_2)\lambda_2(v_3)\lambda_3(v_1)} \sum_{\sigma\in S_3}
    \varepsilon(\sigma) \prod_{i=1,2,3}
    \lambda_{\sigma\cdot(1\,2\,3)(i)}(a_i(v_{\sigma(i)})).
  \end{multline*}
\end{corollary}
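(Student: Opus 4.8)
The plan is to obtain the statement by directly evaluating the tensor identity of Proposition \ref{prop_decomp_h} on the argument $(a_1, a_2, a_3)$ and then rearranging using the definition of $h$. First I would recall that under the identification $\Lin(V)^{*\otimes 3} \simeq (\Lin(V)^{\otimes 3})^*$ fixed at the start of Section 3, the form $h$ from Equation (\ref{intro_h}) is the trilinear form sending $(a_1, a_2, a_3)$ to $\tr(a_1)\tr(a_2)\tr(a_3) - \tr(a_1a_2a_3)$; this is just the content of Lemma \ref{lemma_g_h_t_sigma_star} combined with Lemma \ref{lemma_t_sigma_star}, unwinding what it means for $h$ viewed in $\Lin(V^{\otimes 3})^*$ to be applied to $a_1 \otimes a_2 \otimes a_3$.

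Next I would evaluate the right-hand side of Equation (\ref{decomp_h}) at $(a_1, a_2, a_3)$. Each summand is the simple tensor $\bigotimes_{i=1,2,3} \iota^*(v_{\sigma(i)} \otimes \lambda_{\sigma(123)(i)})$, whose value at $(a_1, a_2, a_3)$ is, by the very definition of a tensor product of linear forms recalled in Section 3, the product $\prod_{i=1,2,3} \iota^*(v_{\sigma(i)} \otimes \lambda_{\sigma(123)(i)})(a_i)$. By the explicit formula for $\iota^*$ in Definition \ref{definition_iota_star}, namely $\iota^*(v\otimes\lambda)(a) = \lambda(a(v))$, this product equals $\prod_{i=1,2,3} \lambda_{\sigma(123)(i)}(a_i(v_{\sigma(i)}))$. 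Summing over $\sigma \in S_3$ with the signs $\varepsilon(\sigma)$ and the prefactor $-1/(\lambda_1(v_2)\lambda_2(v_3)\lambda_3(v_1))$ from Proposition \ref{prop_decomp_h} therefore gives a closed expression for $h(a_1,a_2,a_3)$.

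Finally, combining the two computations, $\tr(a_1)\tr(a_2)\tr(a_3) - \tr(a_1a_2a_3) = h(a_1,a_2,a_3)$ is equal to that signed sum divided by $\lambda_1(v_2)\lambda_2(v_3)\lambda_3(v_1)$ with an overall minus sign; moving $\tr(a_1a_2a_3)$ to the left and the sum to the right (which flips the sign of the sum, turning $-1/(\cdots)$ into $+1/(\cdots)$) yields exactly the claimed identity. There is no genuine obstacle here; the only points requiring care are keeping the index substitution $\lambda_i \mapsto \lambda_{\sigma(123)(i)}$ straight and tracking the sign when $\tr(a_1a_2a_3)$ is moved across the equation. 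The denominator is nonzero by the argument already given at the beginning of the proof of Proposition \ref{prop_decomp_h}, using condition \ref{vnoncol} of Lemma \ref{lemma_eqconv_bases}.
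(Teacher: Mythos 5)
Your proof is correct and matches the paper's own argument: both evaluate Equation (\ref{decomp_h}) at $(a_1,a_2,a_3)$, unfold each simple tensor using the formula $\iota^*(v\otimes\lambda)(a)=\lambda(a(v))$ from Definition \ref{definition_iota_star}, and then rearrange the resulting identity for $h(a_1,a_2,a_3)$ to isolate $\tr(a_1a_2a_3)$. You simply spell out the steps that the paper leaves implicit.
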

\begin{proof}
  Evaluating Equation (\ref{decomp_h}) at any $a_1,
  a_2, a_3$ in $\Lin(V)$ gives:
  \begin{multline*}
    \tr(a_1)\tr(a_2)\tr(a_3) - \tr(a_1a_2a_3) =
    \\ \frac{-1}{\lambda_1(v_2)\lambda_2(v_3)\lambda_3(v_1)}
    \sum_{\sigma\in S_3}
    \varepsilon(\sigma) \prod_{i=1,2,3} \iota^*(v_{\sigma(i)}
    \otimes \lambda_{\sigma\cdot(1\,2\,3)(i)})(a_i)
  \end{multline*}
  and the result follows by Definition \ref{definition_iota_star}.
\end{proof}

\begin{corollary}
  \label{concrete_decomp_h_bilinear}
  For any vectors $v_1, v_2, v_3$ in $V$ and linear forms $\lambda_1,
  \lambda_2, \lambda_3$ on $V$ satisfying the equivalent conditions of Lemma
  \ref{lemma_eqconv_bases}, for all $a_1, a_2$ in
  $\Lin(V)$,
  \begin{multline}
    \label{eq_concrete_decomp_h_bilinear}
    a_1a_2 = \tr(a_1)\tr(a_2)I \\ +
    \frac{1}{\lambda_1(v_2)\lambda_2(v_3)\lambda_3(v_1)} \sum_{\sigma\in S_3}
    \varepsilon(\sigma)
    \tr(a_1c_{\sigma(1), \sigma(2)})
    \tr(a_2c_{\sigma(2), \sigma(3)})
    c_{\sigma(3), \sigma(1)}
  \end{multline}
  where $c_{i,j}$ in $\Lin(V)$ is defined by $c_{i,j}(u) = \lambda_j(u)
  v_i$ for all $u$ in $V$.
\end{corollary}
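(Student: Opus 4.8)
The plan is to recognize both sides of Corollary~\ref{concrete_decomp_h} as linear forms in the single argument $a_3$, for fixed $a_1,a_2$, and then appeal to the nondegeneracy of the trace pairing to ``divide out'' $a_3$. Concretely, the map $*\colon\Lin(V)\to\Lin(V)^*$ is an isomorphism, so an element $z$ of $\Lin(V)$ is determined by the linear form $a_3\mapsto\tr(z\,a_3)$ it induces; it therefore suffices to rewrite every term on the right of Corollary~\ref{concrete_decomp_h} in the shape $\tr(z\,a_3)$ for a suitable $z\in\Lin(V)$ and read off the coefficient of $a_3$.

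The one computational input needed is the identity $\lambda_j(b(v_i))=\tr(c_{i,j}\,b)$ for all $b\in\Lin(V)$. This is immediate: $c_{i,j}=\iota(v_i\otimes\lambda_j)$ by the definitions, hence $\tr(c_{i,j}\,b)=c_{i,j}^*(b)=\iota(v_i\otimes\lambda_j)^*(b)=\iota^*(v_i\otimes\lambda_j)(b)=\lambda_j(b(v_i))$, using the definition of $*$ together with the commuting triangle of Lemma~\ref{diagram_iota_star_commutes} and Definition~\ref{definition_iota_star}. Specializing this to $b=a_i$ with the pair $(i,j)$ replaced by $(\sigma(i),\,\sigma(123)(i))$, and noting the index identities $\sigma(123)(1)=\sigma(2)$, $\sigma(123)(2)=\sigma(3)$, $\sigma(123)(3)=\sigma(1)$, each summand of Corollary~\ref{concrete_decomp_h} becomes $\varepsilon(\sigma)\,\tr(c_{\sigma(1),\sigma(2)}a_1)\,\tr(c_{\sigma(2),\sigma(3)}a_2)\,\tr(c_{\sigma(3),\sigma(1)}a_3)$; the first two factors are scalars in $a_3$, and cyclicity of the trace rewrites them as $\tr(a_1c_{\sigma(1),\sigma(2)})$ and $\tr(a_2c_{\sigma(2),\sigma(3)})$ as in the statement. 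Likewise $\tr(a_1)\tr(a_2)\tr(a_3)=\tr\big((\tr(a_1)\tr(a_2)\,I)\,a_3\big)$.

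Collecting, Corollary~\ref{concrete_decomp_h} now reads $\tr((a_1a_2)\,a_3)=\tr(z\,a_3)$ for all $a_3\in\Lin(V)$, where $z$ is exactly the right-hand side of Equation~(\ref{eq_concrete_decomp_h_bilinear}); by injectivity of $*$ we get $a_1a_2=z$, which is the claim. The denominator $\lambda_1(v_2)\lambda_2(v_3)\lambda_3(v_1)$ is nonzero for the same reason as in the proof of Proposition~\ref{prop_decomp_h}.

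I expect the only real pitfall to be bookkeeping: keeping the composition convention for $\sigma(123)$ consistent with Proposition~\ref{prop_decomp_h}, and correctly tracking that it is the $i=3$ factor --- not the $i=1$ or $i=2$ factor --- that carries $a_3$ and hence produces the ``output'' matrix $c_{\sigma(3),\sigma(1)}$ sitting outside any trace. Everything else reduces to the single trace identity above together with nondegeneracy of the trace form.
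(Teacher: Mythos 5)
Your proof is correct and matches the paper's approach: both proceed by the nondegeneracy of the trace pairing (injectivity of $*$), reducing the bilinear identity to Corollary~\ref{concrete_decomp_h}, using $c_{i,j}=\iota(v_i\otimes\lambda_j)$ to rewrite each factor $\lambda_{\sigma(123)(i)}(a_i(v_{\sigma(i)}))$ as a trace, and peeling off $a_3$. You simply spell out explicitly the index bookkeeping and the identity $\lambda_j(b(v_i))=\tr(c_{i,j}b)$ that the paper's proof leaves as ``directly verified.''
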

\begin{proof}
  Let $x$ denote the right-hand side of Equation
  (\ref{eq_concrete_decomp_h_bilinear}). The claim is that $a_1a_2 =
  x$. That is equivalent to the claim that $\tr(a_1a_2a_3) = \tr(xa_3)$
  for all $a_3$ in $\Lin(V)$. That claim is directly verified by
  comparing the expression of $\tr(a_1a_2a_3)$ given by Corollary
  \ref{concrete_decomp_h} to the expression of $\tr(xa_3)$ expanded by
  using the definition of $x$,
  noting that $c_{i,j}=\iota(v_i\otimes\lambda_j)$.
\end{proof}

\begin{corollary}
  \label{good_old_strassen}
  The original Strassen algorithm is obtained by applying Corollary
  \ref{concrete_decomp_h_bilinear} to the vector space $V=k^2$, with
  the following choices: $v_1=\left(
    \begin{smallmatrix}1\\0
  \end{smallmatrix}\right)$, $\lambda_1=\left(
    \begin{smallmatrix}0 & 1
  \end{smallmatrix}\right)$, $v_2=\left(
    \begin{smallmatrix}0\\1
  \end{smallmatrix}\right)$, $\lambda_2=\left(
    \begin{smallmatrix}1 & 0
  \end{smallmatrix}\right)$,
  $v_3=\left(
    \begin{smallmatrix}1\\1
  \end{smallmatrix}\right)$, $\lambda_3=\left(
    \begin{smallmatrix}1 & -1
  \end{smallmatrix}\right)$.
\end{corollary}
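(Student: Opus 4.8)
The plan is to substitute the stated concrete choices of $v_i$ and $\lambda_i$ into the formula of Corollary~\ref{concrete_decomp_h_bilinear} and check that the resulting rank~7 decomposition of $a_1 a_2$ coincides with Strassen's original formula~\eqref{original_strassen}. First I would verify the hypotheses of Lemma~\ref{lemma_eqconv_bases}: each $\lambda_i$ is nonzero, each $v_i$ is nonzero, and $\lambda_i(v_i) = 0$ for $i = 1,2,3$ (indeed $\lambda_1(v_1) = 0$, $\lambda_2(v_2) = 0$, $\lambda_3(v_3) = 1 - 1 = 0$). I would then check condition~\ref{vnoncol}: the three vectors $(1,0)$, $(0,1)$, $(1,1)$ are pairwise noncolinear, so all four equivalent conditions hold and Corollary~\ref{concrete_decomp_h_bilinear} applies.

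Next I would compute the scalar prefactor $\lambda_1(v_2)\lambda_2(v_3)\lambda_3(v_1)$. With the given choices this is $\lambda_1(0,1) \cdot \lambda_2(1,1) \cdot \lambda_3(1,0) = 1 \cdot 1 \cdot 1 = 1$, so the prefactor equals $1$ and the formula simplifies pleasantly. Then I would write out the six maps $c_{i,j} = \iota(v_i \otimes \lambda_j)$ as explicit $2\times 2$ matrices (nine of them are relevant: the pairs $(\sigma(1),\sigma(2))$, $(\sigma(2),\sigma(3))$, $(\sigma(3),\sigma(1))$ as $\sigma$ ranges over $S_3$ give all ordered pairs $(i,j)$ with $i \ne j$, together with... actually only the off-diagonal pairs $c_{i,j}$, $i\ne j$, appear, six of them). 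Each $c_{i,j}$ is the rank-one matrix $v_i \lambda_j$ (outer product), which is immediate to write down. I would then enumerate the six terms of $\sum_{\sigma \in S_3} \varepsilon(\sigma) \tr(a_1 c_{\sigma(1),\sigma(2)}) \tr(a_2 c_{\sigma(2),\sigma(3)}) c_{\sigma(3),\sigma(1)}$, grouping the $\tr(a_1 \cdot)$ and $\tr(a_2 \cdot)$ factors, and identify the resulting seven-term sum (including the $\tr(a_1)\tr(a_2)I$ term) with Strassen's seven products $M_1, \dots, M_7$ after a linear change of the combination defining the outputs.

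The main obstacle is essentially bookkeeping: Strassen's algorithm is usually presented with seven specific products $M_k = (\text{sum of entries of }a_1)(\text{sum of entries of }a_2)$ and the four output entries expressed as $\pm$-combinations of the $M_k$, whereas our formula produces the product $a_1a_2$ directly as a sum of seven rank-one-matrix-valued terms. Matching the two requires choosing a concrete normal form for ``the original Strassen algorithm'' — e.g.\ the presentation in~\cite{Strassen} or a standard reference — expanding both sides in the standard basis $\{E_{11}, E_{12}, E_{21}, E_{22}\}$ of $\Lin(k^2)$, and checking that the two lists of seven simple tensors in $\Lin(k^2)^* \otimes \Lin(k^2)^* \otimes \Lin(k^2)$ agree up to the obvious relabeling (and up to the symmetry group action on rank~7 decompositions, since uniqueness is only up to that action per~\cite{DEGROOTE19781}). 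Since each $\tr(a_i c_{j,l})$ is a single matrix entry of $a_i$ up to sign (because $c_{j,l}$ is a rank-one matrix supported on one row and one column), this expansion is short, and the identification is a finite check; I expect no conceptual difficulty, only care in pinning down which published form of Strassen's formula to compare against.
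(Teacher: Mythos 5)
Your plan matches the paper's proof: check the hypotheses of Lemma~\ref{lemma_eqconv_bases}, compute the prefactor $\lambda_1(v_2)\lambda_2(v_3)\lambda_3(v_1)=1$, write out the six rank-one matrices $c_{i,j}=v_i\lambda_j$, expand the sum over $S_3$ together with the $\tr(a_1)\tr(a_2)I$ term, and match the seven resulting bilinear forms against the quantities I--VII in Strassen's original paper. One small inaccuracy in your outline: it is not true that each $\tr(a\,c_{j,l})$ is a single entry of $a$ up to sign (for instance $\tr(a\,c_{1,3})=a^{1,1}-a^{2,1}$ and $\tr(a\,c_{3,1})=a^{2,1}+a^{2,2}$), but each is still a signed sum of at most two entries, so the finite check you describe goes through unchanged.
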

\begin{proof}
  Applying Corollary \ref{concrete_decomp_h_bilinear},
  expanding the sum over all 6 permutations, and noticing that
  $\lambda_1(v_2)\lambda_2(v_3)\lambda_3(v_1) = 1$,
  we obtain the following
  matrix multiplication algorithm: for any two $2\times 2$ matrices $a, b$,
  \begin{align*}
    ab & = \tr(a) \tr(b) I  \\
    & + \tr(ac_{1,2}) \tr(bc_{2,3}) c_{3,1} \\
    & + \tr(ac_{2,3}) \tr(bc_{3,1}) c_{1,2} \\
    & + \tr(ac_{3,1}) \tr(bc_{1,2}) c_{2,3} \\
    & - \tr(ac_{2,1}) \tr(bc_{1,3}) c_{3,2} \\
    & - \tr(ac_{1,3}) \tr(bc_{3,2}) c_{2,1} \\
    & - \tr(ac_{3,2}) \tr(bc_{2,1}) c_{1,3}
  \end{align*}
  where the $c_{i,j}=\iota(v_i \otimes \lambda_j)=v_i\lambda_j$ are:
  \begin{eqnarray*}
    c_{1,2} = v_1\lambda_2 = \biggl(
      \begin{matrix}1 & 0 \\ 0 & 0
    \end{matrix}\biggr), &
    c_{1,3} = v_1\lambda_3 = \biggl(
      \begin{matrix}1 & -1 \\ 0 & \phantom{-}0
    \end{matrix}\biggr), \\
    c_{2,3} = v_2\lambda_3 = \biggl(
      \begin{matrix}0 & \phantom{-}0 \\ 1 & -1
    \end{matrix}\biggr), &
    c_{2,1} = v_2\lambda_1 = \biggl(
      \begin{matrix}0 & 0 \\ 0 & 1
    \end{matrix}\biggr), \\
    c_{3,1} = v_3\lambda_1 = \biggl(
      \begin{matrix}0 & 1 \\ 0 & 1
    \end{matrix}\biggr), &
    c_{3,2} = v_3\lambda_2 = \biggl(
      \begin{matrix}1 & 0 \\ 1 & 0
    \end{matrix}\biggr).
  \end{eqnarray*}
  Let $a^{i,j}$ and $b^{i,j}$ denote the matrix coefficients, using
  superscript notation to distinguish that from the subscripts used
  to index the $c_{i,j}$ matrices. Let $e_{i,j}$ be the elementary
  matrix with a 1 at position $(i,j)$ and zeros elsewhere. Using the
  above table of $c_{i,j}$ matrices, the above equation expands to
  \begin{align*}
    ab & = (a^{1,1} + a^{2,2})(b^{1,1} + b^{2,2})(e_{1,1} + e_{2,2})  \\
    & + a^{1,1} (b^{1,2}-b^{2,2}) (e_{1,2} + e_{2,2}) \\
    & + (a^{1,2}-a^{2,2})(b^{2,1} + b^{2,2}) e_{1,1} \\
    & + (a^{2,1} + a^{2,2}) b^{1,1} (e_{2,1} - e_{2,2}) \\
    & - a^{2,2} (b^{1,1} - b^{2,1}) (e_{1,1} + e_{2,1}) \\
    & - (a^{1,1} - a^{2,1}) (b^{1,1} + b^{1,2}) e_{2,2} \\
    & - (a^{1,1} + a^{1,2}) b^{2,2} (e_{1,1} - e_{1,2}).
  \end{align*}
  These bilinear forms in the $a^{i,j}$ and $b^{i,j}$ are exactly the
  terms I, II, III, IV, V, VI, VII introduced in the original
  Strassen article \cite{Strassen}:
  \begin{align*}
    ab & = \mathrm{I}\cdot(e_{1,1} + e_{2,2})  \\
    & + \mathrm{III}\cdot (e_{1,2} + e_{2,2}) \\
    & + \mathrm{VII}\cdot e_{1,1} \\
    & + \mathrm{II}\cdot (e_{2,1} - e_{2,2}) \\
    & + \mathrm{IV}\cdot(e_{1,1} + e_{2,1}) \\
    & + \mathrm{VI}\cdot e_{2,2} \\
    & + \mathrm{V}\cdot (e_{1,2} - e_{1,1}).
  \end{align*}
  Thus the coefficients of the product matrix $ab$ are:
  \begin{align*}
    (ab)^{1,1} & = \mathrm{I} + \mathrm{IV} - \mathrm{V} + \mathrm{VII} \\
    (ab)^{1,2} & = \mathrm{III} + \mathrm{V} \\
    (ab)^{2,1} & = \mathrm{II} + \mathrm{IV} \\
    (ab)^{2,2} & = \mathrm{I} - \mathrm{II} + \mathrm{III} + \mathrm{VI}
  \end{align*}
  exactly as originally stated by Strassen \cite{Strassen}.
\end{proof}

\bibliography{main}{}
\bibliographystyle{IEEEtran}

\end{document}